\newcommand{\defeq}{\;{\raisebox{-.1\height}{$\stackrel{\raisebox{-.1\height}{\tiny\rm def}}{=}$}}\;}
\newcommand{\Iver}[1]{\left[#1\right]_{\mathsmaller{\rm Iver}}}
\def\myTheorem{Theorem}
\newtheorem{definition}{Definition}
\newtheorem{theorem}{\myTheorem}
\def\HomoPhysR{\,{\stackrel{\mathsmaller{\mathfrak{M}}\;}{\Longrightarrow}}\,}
\def\la{\langle}
\def\ra{\rangle}
\begin{document}
\title{A Solution to the Sign Problem Using a\\ Sum of Controlled Few-Fermions}
\iftoggle{UseREVTEX} {
\author{David H. Wei\,}
\email[]{david.hq.wei@gmail.com}
\affiliation{Quantica Computing, LLC, San Jose, California}
} {
\author{David H. Wei$^{\,*}$}
\vspace{0.75ex}
\address{Quantica Computing, LLC, San Jose, California}
\address{*\,Email: david.hq.wei@gmail.com}
}
\begin{abstract}
A restricted path integral method is proposed to simulate a type of quantum system or Hamiltonian called a sum of controlled few-fermions on a classical computer using Monte Carlo without a numerical sign problem. Then a universality is proven to assert that any bounded-error quantum polynomial time (BQP) algorithm can be encoded into a sum of controlled few-fermions and simulated efficiently using classical Monte Carlo. Therefore, BQP is precisely the same as the class of bounded-error probabilistic polynomial time (BPP), namely, BPP\,=\,BQP.
\end{abstract}
\iftoggle{UseREVTEX} {\maketitle} {\vspace{0.75ex}}

The ability to simulate quantum systems efficiently on a classical computer \cite{Feynman82,Grotendorst02} or a quantum machine \cite{Feynman82,Feynman85,Lloyd96} is crucially important for fundamental sciences and practical applications. Of particular importance is to simulate the ground state of a quantum many-body system efficiently on a classical computer. Among many numerical methods, quantum Monte Carlo \cite{Grotendorst02} is uniquely advantageous as being based on first principles without uncontrolled systematic errors and using polynomially efficient importance sampling from an exponentially large Hilbert space, until its polynomial efficiency is spoiled by the notorious numerical sign problem \cite{Loh90,Ceperley91}. On the other hand, many computational problems that are not directly related to simulations of quantum systems can be solved by running a quantum algorithm on a quantum computer, which reduces to simulating a quantum system, especially the ground state of a quantum many-body system. A fundamental question in computational complexity theory is whether the class of bounded-error probabilistic polynomial time (BPP) is the same as that of bounded-error quantum polynomial time (BQP) \cite{Bernstein97}, which will be answered affirmatively in this presentation, by firstly identifying and characterizing a type of Hamiltonian called a sum of controlled few-fermions (CFFs), whose Gibbs kernels and ground states can be simulated rigorously using path integral Monte Carlo (PIMC) on a classical computer, then demonstrating that a universal BQP algorithm can be encoded into a sum of CFFs and PIMC-simulated efficiently with a mixing time \cite{Levin08} that is provably polynomial-bounded.

Let a triple $({\cal C},{\cal H},{\cal B})$ represent a general quantum system \cite{Wei20}, where ${\cal C}$ is a {\it configuration space} consisting of {\it configuration points}, each of which is a tuple or vector of {\it variable values} ({\it e.g.}, eigenvalues) assigned to an ensemble of {\it coordinate variables} that are dynamical variables associated with the quantum system, for example but not limited to spatial positions of particles, while ${\cal H} \defeq {\cal H}({\cal C}) \subseteq L^2({\cal C})$ is a Hilbert space of state vectors ({\it i.e.}, wavefunctions) supported by ${\cal C}$, and ${\cal B} \defeq {\cal B}({\cal H})$ is a Banach algebra of bounded operators acting on vectors in ${\cal H}$, which contains a strongly continuous one-parameter semigroup of Gibbs operators $\{\exp(-\tau H)\}_{\tau\in[0,\infty)}$, whose infinitesimal generator $H$ is designated as the Hamiltonian governing the quantum system. A Hamiltonian is the epitome of a general lower-bounded self-adjoint operator $h$, called a {\it partial Hamiltonian}, whose eigenvalues are denoted by $\{\lambda_n(h)\}_{n=0}^{\infty} \subseteq \mathbb{R}$ in a strictly increasing order, and the associated eigenstates are denoted by $\{\psi_n(h)\}_{n=0}^{\infty}$, each of which is either a non-degenerate wavefunction or a suitable representation of an eigen subspace. A partial Hamiltonian $h$ and its associated Gibbs operators $\{\exp(-\tau h)\}_{\tau\in[0,\infty)}$ are also said to be supported by ${\cal C}$ for the sake of brevity.

It is almost always the case, here taken as an axiomatic premise, that all partial Hamiltonians in consideration are of the Schr\"odinger type, as a sum of an elliptic differential operator $-\Delta$ and a bounded potential $V$, so to substantiate the {\it Hopf lemma} and the {\it strong Hopf extremum principle} \cite{Hopf52,Evans10,Wei20}. It is without loss of generality (WLOG) to assume that ${\cal C}$ is a compact Riemannian manifold of a finite dimension $\dim({\cal C}) \in \mathbb{N}$, and all wavefunctions are real-valued, so the Hilbert spaces and Banach algebras are over $\mathbb{R}$ \cite{Bernstein97,Rudolph02,Biamonte08,Wei20}. The Riemannian metric $g$ defining ${\cal C}$ induces a length measure for curves and a volume measure $V_g$ for subsets on ${\cal C}$. Being compact, ${\cal C}$ has a finite diameter ${\rm diam}({\cal C}) \in \mathbb{R}$, thus a finite ${\rm size}({\cal C}) \defeq \dim({\cal C}) + {\rm diam}({\cal C})$. It is assumed that all fundamental equations of physics, especially the Schr\"odinger and quantum field-theoretic equations, are nondimensionalized and written in a so-called natural unit system.

Consider a {\it many-fermion system} (MFS) of a variable size, comprising a number $S\in\mathbb{N}$ of fermion species, each species labeled by $s\in[1,S]$ consisting of a number $n_s\in\mathbb{N}$ of identical fermions moving on a low-dimensional Riemannian manifold ${\cal X}_s$, where the total number of particles $N_* \defeq \sum_{s=1}^{\mathsmaller{S}}n_s$ may go up unbounded, while both $d_s \defeq \dim({\cal X}_s)$ and $D_s \defeq {\rm diam}({\cal X}_s)$, $\forall s\in[1,S]$ are always bounded by a fixed number. Mathematically, the identical fermions of each species $s\in[1,S]$ may be artificially labeled by an integer $n\in[1,n_s]$, so that a configuration coordinate (point) $q_s \defeq (q_{s1},q_{s2},\cdots\!,q_{sn_s}) \in {\cal X}_s^{n_s}$ can represent their spatial configuration, where $\forall(s,d)\in[1,S]\times[1,n_s]$, $q_{sn} \defeq (x_{sn1},\cdots\!,x_{snd_s})$ is a $d_s$-dimensional coordinate of the $n$-th fermion of the $s$-th species, and $\forall d\in[1,d_s]$, $x_{snd}$ is a coordinate along the $d$-th dimension and counted as one {\it degree of freedom}. But physically, the indistinguishability among identical fermions dictates that all of the label-exchanged coordinates be equivalent and form an orbit ${\cal G}_sq_s \defeq \{\pi_sq_s : \pi_s \in {\cal G}_s\}$ for any $q_s \in {\cal X}_s^{n_s}$, where ${\cal G}_s$ is the symmetry group of permuting $n_s$ labels, $\pi_s \in {\cal G}_s$ is a typical permutation. Straightforwardly, the Cartesian product ${\cal C} \defeq \prod_{s=1}^{\mathsmaller{S}} {\cal X}_s^{n_s}$ is a configuration space for the MFS, and the group direct product $G_* \defeq \prod_{s=1}^{\mathsmaller{S}} {\cal G}_s$, called the {\it exchange symmetry group} of the MFS, acts on ${\cal C}$ and partitions it into disjoint orbits. Clearly, every pair of permutations $\pi \in {\cal G}_s$, $s\in[1,S]$ and $\pi' \in {\cal G}_{s'}$, $s'\in[1,S]$ with $s\neq s'$ commute, hence each ${\cal G}_s$, $s\in[1,S]$ is straightforwardly a normal subgroup of $G_*$. All of the even permutations in $G_*$ form a subgroup $A_*$, called the {\it exchange alternating group}. It is an axiom of physics that any legitimate quantum state $\psi \in {\cal H}({\cal C})$ must be exchange-symmetric as $[\pi\psi](q) \defeq \psi(\pi q) = (-1)^{\pi} \psi(q)$, $\forall q \in {\cal C}$, $\forall \pi \in G_*$. With respect to $G_*$ and its actions on ${\cal C}$ and ${\cal H}({\cal C})$, a {\it full exchange symmetrization operator} is defined as ${\cal F} \defeq |G_*|^{-1} \sum_{\,\pi\in G_*} (-1)^{\pi\,} \pi$, with $|G_*|$ denoting the cardinality of $G_*$ as a set. Similarly, an {\it even exchange symmetrization operator} is defined as ${\cal E} \defeq |A_*|^{-1} \sum_{\,\pi\in A_*} (-1)^{\pi\,} \pi$.

For any partial Hamiltonian $h$, and any $(r,q,\tau) \in {\cal C}^2 \times (0,\infty)$, let $\la r|e^{-\tau h}|q\ra$ represent an artificial, non-negative definite, boltzmannonic Gibbs transition amplitude from $q$ to $r$ in (imaginary) time $\tau$ due to $h$, which ignores the fermionic exchange symmetry and regards all particles distinguishable, let
\begin{align}
\la r|e^{-\tau h}|{\cal F}q\ra \,\defeq\,& |G_*|^{-1} \, {\textstyle{ \scalebox{1.25}{$\sum$}_{\pi\in G_*} }} (-1)^{\pi} \la r|e^{-\tau h}|\pi q\ra \,, \\[0.75ex]
\la{\cal F}r|e^{-\tau h}|q\ra \,\defeq\,& |G_*|^{-1} \, {\textstyle{ \scalebox{1.25}{$\sum$}_{\pi\in G_*} }} (-1)^{\pi} \la\pi r|e^{-\tau h}|q\ra \,, \\[0.75ex]
\la{\cal F}r|e^{-\tau h}|{\cal F}q\ra \,\defeq\,& |G_*|^{-2} \, {\textstyle{ \scalebox{1.25}{$\sum$}_{\pi\in G_*} \, \scalebox{1.25}{$\sum$}_{\pi'\in G_*} }} (-1)^{\pi+\pi'} \la\pi r|e^{-\tau h}|\pi'q\ra
\end{align}
denote a pre-, post-, and dual-symmetrized fermionic Gibbs transition amplitude. The function $\la\cdot|e^{-\tau h}|\cdot\ra \in L^2({\cal C}^2)$ is called the {\it boltzmannonic Gibbs kernel}, and $\la{\cdot}|e^{-\tau h}|{\cal F}\cdot\ra$, $\la{\cal F}{\cdot}|e^{-\tau h}|\cdot\ra$, $\la{\cal F}{\cdot}|e^{-\tau h}|{\cal F}\cdot\ra \in L^2({\cal C}^2)$ are called the pre-, post-, dual-symmetrized fermionic Gibbs kernels respectively, being associated with a formal Gibbs operator $\exp(-\tau h)$, $\tau\in(0,\infty)$ generated by a partial Hamiltonian $h$. It is obvious that $\la{\cdot}|e^{-\tau h}|{\cal F}\cdot\ra \equiv \la{\cal F}{\cdot}|e^{-\tau h}|\cdot\ra \equiv \la{\cal F}{\cdot}|e^{-\tau h}|{\cal F}{\cdot}\ra$, either one may be referred to as the {\it fermionic Gibbs kernel}.

{\it All Feynman path integrals, Gibbs transition amplitudes, and Gibbs kernels in this presentation should be interpreted in the boltzmannonic sense as non-negative definite quantities, unless a full exchange symmetrization operator ${\cal F}$ is placed explicitly in front of one or more configuration coordinate(s) to signify full exchange symmetrization and summation of signed contributions to yield a fermionic quantity.}

Given an MFS governed by a Hamiltonian $H$, a computationally important number is the (descriptive) size of $H$, denoted by ${\rm size}(H)$, which is basically, up to a constant factor, the minimum number of classical bits needed to describe $H$. All computational complexities and singular values of operators will be measured against ${\rm size}(H)$. Of great interests are the so-called (computationally) local Hamiltonians \cite{Kitaev02,Kempe06} of the form $H=\sum_{k=1}^{\mathsmaller{K}}H_k$, $K\in\mathbb{N}$, $K=O({\rm poly}(N_*))$, where each $H_k$, $k \in [1,K]$ moves no more than a constant number of degrees of freedom around any configuration point, the size of $H$ is defined as ${\rm size}(H) \defeq {\rm size}({\cal C}) + K$, so ${\rm size}(H) = O({\rm poly}(N_*))$. In this presentation, $O(\cdot)$, $\Omega(\cdot)$, and $\Theta(\cdot)$ are the traditional notations of asymptotics in the Knuth convention, representing an upper bound, a lower bound, and a simultaneous upper and lower bound, respectively \cite{Knuth76}. An $H_k$, $k \in [1,K]$ is said to move an $(s,n,d)$-th degree of freedom, $(s,n,d)\in[1,S]\times[1,n_s]\times[1,d_s]$ around a configuration point $q = (\cdots\!,q_{snd},\cdots) \in {\cal C}$, when there exist a $\tau \in (0,\infty)$ and an $r = (\cdots\!,r_{snd},\cdots) \in {\cal C}$, such that $r_{snd} \neq q_{snd}$ while the boltzmannonic Gibbs transition amplitude $\la r|e^{-\tau H_k}|q\ra \neq 0$ \cite{Wei20}. For any $k\in[1,K]$ and any $q\in{\cal C}$, let ${\cal R}_k(q) \defeq \{r\in{\cal C} : \exists \, \tau > 0\mbox{ such that }\la r|e^{-\tau H_k}|q\ra \neq 0\}$, which is called the {\it boltzmannonic reach} of $q$ by $H_k$.

One exemplary local Hamiltonian describes a type of MFS called a {\it few-species fermionic system} (FSFS), which comprises a small number $S\in\mathbb{N}$ of different fermion species, one or more species having a large number of identical particles, where the particles are artificially labeled so that a conventional Schr\"odinger operator of the FSFS is rewritten deliberately into a local Hamiltonian $H = \sum_{s=1}^{\mathsmaller{S}} \sum_{l=1}^{n_s} \sum_{m=1}^{n_s} H_{slm}$, where for each $(s,l,m)\in[1,S]\times[1,n_s]^2$, $H_{slm} \defeq {-}(\Delta_{sl}+\Delta_{sm})/2n_s + V/\sum_{s=1}^{\mathsmaller{S}}n_s^2$ moves only the $l$-th and the $m$-th labeled particle of the $s$-th species through the Laplace-Beltrami operators $\Delta_{sl}$ and $\Delta_{sm}$. A {\it single-species fermionic system} is a special case of FSFS with $S=1$, namely, involving a single fermion species having a large number of identical particles being artificially labeled.

Another exemplary local Hamiltonian $H=\sum_{k=1}^{\mathsmaller{K}}H_k$ describes an important type of MFS called a {\it many-species fermionic system} (MSFS), which comprises a large number $S\in\mathbb{N}$ of fermion species, each of which has no more than a small constant of identical fermions, where each $H_k$, $k\in[1,K]$ moves particles of no more than a small constant number of species around any given configuration point $q\in{\cal C}$.

\begin{definition}{} \label{defiSCFFHamil}
Given an MSFS with a configuration space ${\cal C}$ coordinating a large number of fermions of a variable number $S\in\mathbb{N}$ of species, a form sum $H=\sum_{k=1}^{\mathsmaller{K}}H_k$, $K = O({\rm poly}(S))$ defining a local Hamiltonian is called a sum of CFFs, when each $H_k$, $k\in[1,K]$, called a CFF interaction, is invariant under any exchange of identical particles, namely, $\pi^{-1}H_k\pi = H_k$, $\forall \pi \in G_*$, and satisfies $\la r|e^{-\tau H_k}|q\ra = 0$ for any $r\in{\cal C}$ and $q\in{\cal C}$ that differ in more than a small constant number of degrees of freedom. Such an MSFS or Hamiltonian $H$ is said to be sum-of-CFFs (SCFF), with SCFF serving as an adjective.
\iftoggle{UseREVTEX}{}{\vspace{-1.5ex}}
\end{definition}

For all $q\in{\cal C}$ and any $k\in[1,K]$, the CFF interaction $H_k$, by its invariance under any exchange of identical particles, moves either all or none of the particles of each species around $q$. The species and fermions being moved by $H_k$, $k\in[1,K]$ around $q\in{\cal C}$ constitute a subsystem called a controlled few-fermion (CFF), which is associated with a factor subspace ${\cal C}_k(q)$ and a factor subgroup $G_k(q) \le G_*$, in the sense that, ${\cal C}'_k(q)$ and $G'_k(q) \le G_*$ exist such that ${\cal C}_k(q)\times{\cal C}'_k(q) \simeq {\cal C}$ and $G_k(q)\times G'_k(q) \simeq G_*$, with ${\cal C}_k(q)$ and $G_k(q)$ respectively coordinating and permuting the particles in said CFF. Clearly, each $G_k(q)$ is a normal subgroup of $G_*$ and itself a direct product of a small number of normal subgroups from the list $\{{\cal G}_s\}_{s\in[1,{\mathsmaller{S}}]}$. Let $A_k(q)$ denote the associated exchange alternating group, and define the corresponding full and even exchange symmetrization operators as ${\cal F}_k(q) \defeq |G_k(q)|^{-1} \sum_{\,\pi\in G_k(q)\,} (-1)^{\pi\,} \pi$ and ${\cal E}_k(q) \defeq |A_k(q)|^{-1} \sum_{\,\pi\in A_k(q)\,} (-1)^{\pi\,} \pi$.

Despite an SCFF system having a large total number of particles $N_*$, it is always computationally easy to compute the boltzmannonic and fermionic Gibbs kernels $\la\cdot|e^{-\tau H_k}|q\ra$ and $\la\cdot|e^{-\tau H_k}|{\cal F}q\ra$, either analytically or numerically, with the complexity bounded by a constant, $\forall(q,\tau)\in{\cal C}\times(0,\infty)$, $\forall k\in[1,K]$, since $\dim({\cal C}_k(q))$ is always upper-bounded by a small constant. In particular, $\forall(r,q)\in{\cal C}^2$, in the formula $\la{\cal F}r|e^{-\tau H_k}|q\ra = \sum_{\pi\in G_*} (-1)^{\pi}\la\pi r|e^{-\tau H_k}|q\ra = \la r|e^{-\tau H_k}|{\cal F}q\ra = \sum_{\pi\in G_*} (-1)^{\pi}\la r|e^{-\tau H_k}|\pi q\ra$, even though the whole group $G_*$ is used for the domain of exchange symmetry to simplify the mathematical notation, it is really only those permutations in the much smaller subgroup $G_k(q)$ or $G_k(r)$ that are active and relevant, since $\la\pi r|e^{-\tau H_k}|q\ra=0$ for all $\pi \not\in G_k(q)$ or $\la r|e^{-\tau H_k}|\pi q\ra=0$ for all $\pi \not\in G_k(r)$. Such efficient computability of $\la{\cal F}\cdot|e^{-\tau H_k}|\cdot\ra = \la\cdot|e^{-\tau H_k}|{\cal F}\cdot\ra$ for all $k\in[1,K]$ is the key for efficient simulation of an SCFF system, and by its universality, of any quantum system on a classical computer.

\begin{definition}{} \label{defiPLTKandGSP}
Let $H=\sum_{k=1}^{\mathsmaller{K}}H_k$ be a form sum defining an SCFF Hamiltonian, where $\lambda_1(H_k)-\lambda_0(H_k) = \Omega(1/{\rm poly}({\rm size}(H)))$, $\forall k\in[1,K]$. The form sum is called a Lie-Trotter-Kato (LTK) decomposition, and $H$ is called LTK-decomposed, when $\forall \epsilon > 0$, there exists an $m \in \mathbb{N}$, $m = O({\rm poly}({\rm size}(H)\,{+}\,\epsilon^{-1}))$, such that the absolute value of $\la r|\scalebox{1.1}{$\{$}\prod_{k=1}^{\mathsmaller{K}}e^{-H_k/m}\scalebox{1.1}{$\}$}^m|{\cal F}q\ra - \la r|e^{-H}|{\cal F}q\ra$ is less than $\epsilon \la r|e^{-H}|q\ra$, $\forall (r,q) \in {\cal C}^2$. The same form sum is called a ground-state projection (GSP) decomposition and $H_0\defeq H$ is called GSP-decomposed, when $\forall \epsilon > 0$, there exist an $m \in \mathbb{N}$, $m = O({\rm poly}({\rm size}(H)\,{+}\,\epsilon^{-1}))$ and a constant $A_m>0$ depending only on $m$, such that $\|A_m\scalebox{1.1}{$\{$}\prod^{\mathsmaller{K}}_{k=1\!}\mathit{\Pi}_k\scalebox{1.1}{$\}$}^m\,{-}\,\mathit{\Pi}_0\| < \epsilon$, where $\|{\cdot}\|$ denotes the operator norm, $\mathit{\Pi}_k \defeq \lim_{\,\tau\rightarrow\infty}e^{-\tau[H_k-\lambda_0(H_k)]}$ are projections to the ground state subspaces of $H_k$, $\forall k \in [0,K]$.
\iftoggle{UseREVTEX}{}{\vspace{-1.5ex}}
\end{definition}

The definition of an LTK- or GSP-decomposed Hamiltonian is inspired by the LTK product formula $e^{-\tau H\!} = \lim_{\,m\rightarrow\infty\!} \scalebox{1.1}{$\{$}\prod_{k=1}^{\mathsmaller{K}}e^{-\tau H_k/m}\scalebox{1.1}{$\}$}^m$, $\tau\in(0,\infty)$ in a suitable operator topology, which suggests to divide $[0,\tau]$ into time intervals delimited by time instants $\{\tau_n \defeq n\,\delta\tau/K\}_{n\in[0,\,N]}$, $\delta\tau \defeq \tau/m$, $N \defeq mK$, and break the Gibbs operator $e^{-\tau H\!}$ down into a sequence of Gibbs operators $\{{\sf G}_n \defeq e^{-\delta\tau H_{n\|\mathsmaller{K}}}\}_{n\in[1,\,N]}$, so to compute the boltzmannonic and fermionic Gibbs kernels using the Feynman path integral, also known as the functional integration \cite{Feynman65,Simon79}. For all $n\in\mathbb{Z}$ and any $K\in\mathbb{N}$, the expression $n\,\|\,K$ denotes the unique number such that $(n\,\|\,K) \in [1,K]$ and $(n\,\|\,K) \equiv n \! \pmod K$. Each Gibbs operator ${\sf G}_n$ and the spacetime domain ${\cal C}\times[\tau_{n-1},\tau_n]$, $n\in[1,N]$ constitute a {\it Feynman slab}, delimited by two {\it Feynman planes} $({\cal C},\tau_{n-1}) \defeq \{(q_{n-1},\tau_{n-1}) : q_{n-1} \in {\cal C}\}$ and $({\cal C},\tau_n) \defeq \{(q_n,\tau_n) : q_n \in {\cal C}\}$ \cite{Wei20}. If necessary, each Feynman slab associated with a constant CFF interaction $H_{n\|\mathsmaller{K}}$, $n\in[1,N]$ can be further divided into thinner {\it Feynman slices}, each of which is defined by two Feynman planes separated by an interval of imaginary time that is as small as desired.

For Feynman slabs or slices that are sufficiently thin, there are simple rules for {\it Feynman flights} \cite{Wei20}, which determine the associated Gibbs transition amplitude between two points $q$ and $r$ on two narrowly separated Feynman planes respectively. Said rules for Feynman flights induce a Wiener measure that assigns a non-negative Wiener density $W(\gamma) = e^{-U(\gamma)}$ to each Feynman path $\gamma$, where $U(\cdot)$ is an action functional that is linear with respect to path concatenation, namely, $U(\gamma) = U(\gamma_1) + U(\gamma_2)$ holds when two segments of Feynman paths $\gamma_1$ and $\gamma_2$ concatenate into a continuous Feynman path $\gamma \defeq \gamma_2 * \gamma_1$, which starts at the start point of a first segment $\gamma_1$, goes to the end point of $\gamma_1$ that coincides with the start point of a second segment $\gamma_2$, and continues till the end point of $\gamma_2$.

A number of consecutive Feynman slabs with the corresponding sequence of Gibbs operators $\{{\sf G}_n\}_{n\in[n_1,\,n_2]}$, $0<n_1 \le n_2\le N$ constitute a {\it Feynman stack} with the two Feynman planes $({\cal C},\tau_{n_0})$ and $({\cal C},\tau_{n_2})$ forming its boundaries \cite{Wei20}, where $n_0 \defeq n_1\,{-}\,1$, $\forall n_1 \in \mathbb{N}$. Pick two points $(q_{n_0},\tau_{n_0})$ and $(q_{n_2},\tau_{n_2})$ on the two boundary Feynman planes, the set of all Feynman paths
\begin{equation}
\Gamma(q_{n_2},\tau_{n_2};q_{n_0},\tau_{n_0}) \,\defeq\, \{\gamma(\tau) : \tau \in [\tau_{n_0},\tau_{n_2}] \mapsto {\cal C} \mbox{ such that } \gamma(\tau_{n_0}) = q_{n_0},\,\gamma(\tau_{n_2}) = q_{n_2}\}
\end{equation}
constitutes a {\it Feynman spindle} \cite{Wei20}, which gives rise to a boltzmannonic Gibbs transition amplitude
\begin{align}
\rho\kern0.1em(q_{n_2},\tau_{n_2};q_{n_0},\tau_{n_0}) \,\defeq\,& \scalebox{1.25}{$\int$}_{\!\gamma\in\Gamma(q_{n_2},\tau_{n_2};q_{n_0},\tau_{n_0})} \, W(\gamma) \, d\gamma \label{StackGibbsBoltz} \\[0.75ex]
\,=\,\;& \scalebox{1.25}{$\int$} \!\cdots\! \scalebox{1.25}{$\int$} \, \scalebox{1.3}{$\{$} \, \scalebox{1.15}{$\prod$}^{n_2}_{n=n_1} \la q_n|e^{-\delta\tau H_{n\|\mathsmaller{K}}}|q_{n-1}\ra \scalebox{1.3}{$\}$} \, \scalebox{1.3}{$\{$} \, \scalebox{1.15}{$\prod$}^{n_2-1}_{n=n_1} dq_n \scalebox{1.3}{$\}$} \,, \nonumber
\end{align}
which can be exchange-symmetrized to produce a fermionic Gibbs transition amplitude
\begin{equation}
{\textstyle{ \rho\kern0.1em(q_{n_2},\tau_{n_2};{\cal F}q_{n_0},\tau_{n_0}) \,\defeq\, |G_*|^{-1} \, \scalebox{1.25}{$\sum$}_{\,\pi\in G_*} (-1)^{\pi} \rho\kern0.1em(q_{n_2},\tau_{n_2};\pi q_{n_0},\tau_{n_0}) \,. }} \label{StackGibbsFermi}
\end{equation}
Note that the fermionic exchange symmetry is ignored in equation (\ref{StackGibbsBoltz}) and the integration over all particle-distinguished Feynman paths yields a non-negative definite, boltzmannonic Gibbs transition amplitude. Fig.~\ref{FigFeynmanSpindles} (left) shows a Feynman spindle from a start point $(q_{n_0},\tau_{n_0})$ to an end point $(q_{n_2},\tau_{n_2})$, with the gray ellipses indicating the presence of infinitely many other Feynman paths connecting the points $(q_{n_0},\tau_{n_0})$ and $(q_{n_2},\tau_{n_2})$, where each Feynman path is assigned a non-negative Wiener density, and the integration of such Wiener density yields the boltzmannonic $\rho\kern0.1em(q_{n_2},\tau_{n_2};q_{n_0},\tau_{n_0})$. Fig.~\ref{FigFeynmanSpindles} (right) illustrates how exchange symmetrization induces an {\it orbit of Feynman spindles} from the $G_*$-orbit of the start point $G_*q_{n_0\!} = \{\pi q_{n_0\!} : \pi \in G_*\}$ to the end point  $q_{n_2}$, where each Feynman spindle $\Gamma(q_{n_2},\tau_{n_2};\pi q_{n_0},\tau_{n_0})$, $\pi \in G_*$ gives rise to a boltzmannonic $\rho\kern0.1em(q_{n_2},\tau_{n_2};\pi q_{n_0},\tau_{n_0})$, which is weighted by $(-1)^{\pi}$ and makes a signed contribution to the fermionic $\rho\kern0.1em(q_{n_2},\tau_{n_2};{\cal F}q_{n_0},\tau_{n_0})$. The ellipsis in Fig.~\ref{FigFeynmanSpindles} (right) indicates the presence of many Feynman spindles starting from exchange-permuted configuration points.

\begin{figure}[ht]
\centering
\begin{tabular}{cccc}
\hspace{-1.5em} &
\raisebox{0.25\height} { \subfloat { \includegraphics [width=0.45\textwidth] {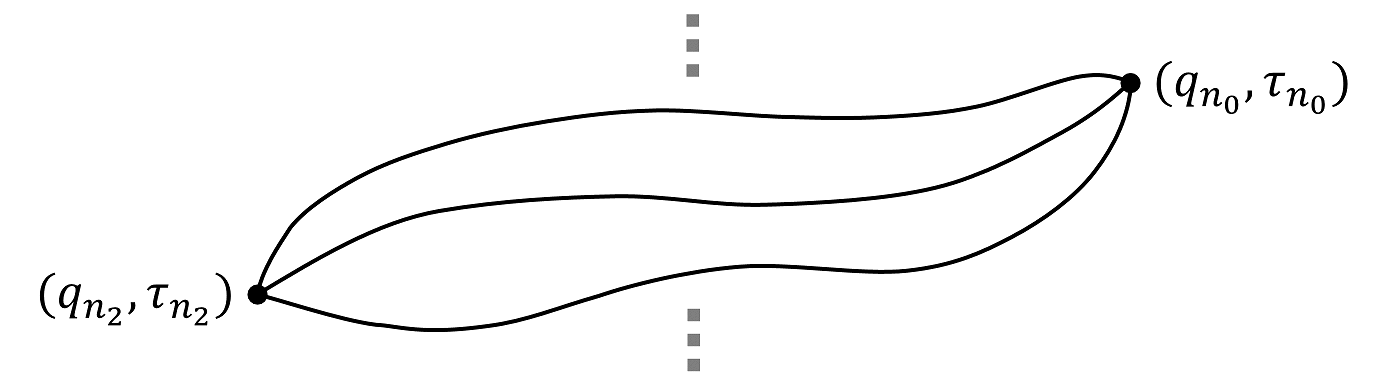} \label{fig:FeynmanSpindle} } }
& \hspace{-2.5em} &
\subfloat { \includegraphics [width=0.45\textwidth] {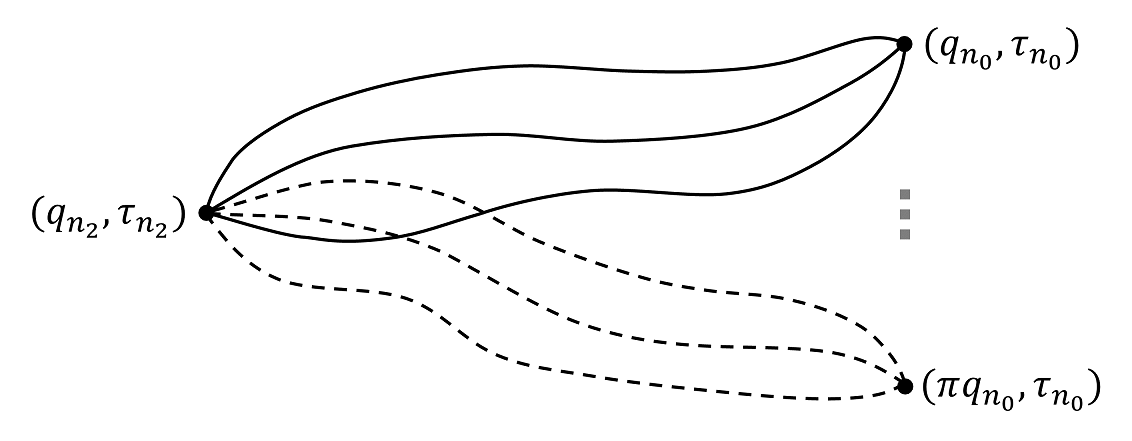} \label{fig:FeynmanSpindle2} }
\end{tabular}
\vspace{-1.5ex}
\captionsetup{font={small}}
\captionsetup{width=0.85\textwidth}
\captionsetup{justification=centering}
\caption{Left: A Feynman spindle from a start point $(q_{n_0},\tau_{n_0})$ to an end point $(q_{n_2},\tau_{n_2})$; Right: An orbit of Feynman spindles from a start point $(q_{n_0},\tau_{n_0})$ to an an orbit of end points $(G_*q_{n_2},\tau_{n_2})$.}
\label{FigFeynmanSpindles}
\end{figure}
\vspace{2.0ex}

The formulation of Feynman path integral is rightly suited for simulating a Gibbs kernel via Monte Carlo integration over a many- but finite-dimensional space. PIMC would realize BPP simulations of quantum systems, were it not for the sign problem \cite{Loh90,Ceperley91} due to the presence of negative amplitudes, particularly in fermionic systems. PIMC methods using {\it restricted path integrals} (RPIs) \cite{Ceperley91,Ceperley96,Zhang97,Zhang04,Kruger08} have been proposed and applied to avoid negative amplitudes. But previous RPIs are only approximate methods as they rely on {\it a priori} approximations for the nodal surfaces of Gibbs kernels associated with the Hamiltonian of a whole system, which are unknown and hard to compute. Here I will show that for an SCFF Hamiltonian, negative amplitudes can be avoided by restricting Feynman paths locally, with respect to the efficiently computable nodal surface of a Gibbs kernel associated with an individual CFF interaction.

For a single Feynman slab associated with a CFF interaction $H_{n_1\|\mathsmaller{K}}$ between two Feynman planes $({\cal C},\tau_{n_0})$, $({\cal C},\tau_{n_1})$, $\tau_{n_1} > \tau_{n_0}$, $n_1\in\mathbb{N}$, $n_0=n_1-1$, the pre-symmetrized fermionic Gibbs kernel $\rho\kern0.1em(q,\tau;{\cal F}q_{n_0},\tau_{n_0}) \defeq$ $\la q|e^{-(\tau-\tau_{n_0})H_{n_1\|\mathsmaller{K}}}|{\cal F}q_{n_0}\ra$ is a $(q,\tau)$-jointly continuous function of $(q,\tau) \in {\cal C} \times (\tau_{n_0},\infty)$ for any fixed $q_{n_0\!}\in{\cal C}$. The preimage $\{(q,\tau) : \rho\kern0.1em(q,\tau;{\cal F}q_{n_0},\tau_{n_0}\}$ is an open set, in which the unique connected component that contains the trivial path $\{(q_{n_0},\tau) : \tau \in (\tau_{n_0},\infty)\}$, denoted by ${\cal T}(;q_{n_0},\tau_{n_0})$, is called the {\it forward nodal tube} or {\it Ceperley reach} of $(q_{n_0},\tau_{n_0})$ \cite{Ceperley91,Wei20}. For any $\tau \in (\tau_{n_0},\infty)$, let ${\cal N}(\tau;q_{n_0},\tau_{n_0}) \defeq {\cal T}(;q_{n_0},\tau_{n_0}) \,\cap\, ({\cal C} \times \{\tau\})$, which is clearly the nodal cell of $\rho\kern0.1em(\cdot,\tau;{\cal F}q_{n_0},\tau_{n_0})$ containing the point $\cdot=q_{n_0}$. It follows from $H_{n_1\|\mathsmaller{K}}$ substantiating the Hopf lemma and the strong Hopf extremum principle that ${\cal N}(\tau;q_{n_0},\tau_{n_0})$ as an open set-valued function of $\tau\in(\tau_{n_0},\infty)$ is continuous; $\forall(q,\tau)\in{\cal C}\times(\tau_{n_0},\infty)$, $(q,\tau)\in{\cal T}(;q_{n_1},\tau_{n_1})$ if and only if $q\in{\cal N}(\tau;q_{n_0},\tau_{n_0})$ and a curve within ${\cal N}(\tau;q_{n_0},\tau_{n_0})$ exists to connect $q$ and $q_{n_0}$. Similarly, with respect to any fixed $(q_{n_1},\tau_{n_1})$ and the post-symmetrized fermionic Gibbs kernel $\rho\kern0.1em({\cal F}q_{n_1},\tau_{n_1};q,\tau)$, $(q,\tau) \in {\cal C} \times (-\infty,\tau_{n_1})$, define a {\it backward nodal tube} or {\it Ceperley reach} ${\cal T}(q_{n_1},\tau_{n_1};)$, as the unique connected component of the open set $\{(q,\tau) : \rho\kern0.1em({\cal F}q_{n_1},\tau_{n_1};q,\tau) > 0\}$ that contains the trivial path $\{(q_{n_1},\tau) : \tau \in (-\infty,\tau_{n_1})\}$, then define ${\cal N}(q_{n_1},\tau_{n_1};\tau) \defeq {\cal T}(q_{n_1},\tau_{n_1};) \,\cap\, ({\cal C} \times \{\tau\})$, $\forall \tau \in (-\infty,\tau_{n_1})$, which is clearly the nodal cell of $\rho\kern0.1em({\cal F}q_{n_1},\tau_{n_1};(\cdot,\tau)$ containing the point $\cdot=q_{n_1}$. Also similarly, ${\cal N}(q_{n_1},\tau_{n_1};\tau)$ as an open set-valued function of $\tau\in(-\infty,\tau_{n_1})$ is continuous; $\forall(q,\tau)\in{\cal C}\times(-\infty,\tau_{n_1})$, $(q,\tau)\in{\cal T}(q_{n_1},\tau_{n_1};)$ if and only if $q\in{\cal N}(q_{n_1},\tau_{n_1};\tau)$ and a curve within ${\cal N}(q_{n_1},\tau_{n_1};\tau)$ exists to connect $q$ and $q_{n_1}$.

In a {\it direct fermion path integral} method \cite{Ceperley96}, the fermionic Gibbs kernel $\{\rho\kern0.1em(q_{n_1},\tau_{n_1};{\cal F}q_{n_0},\tau_{n_0}) : (q_{n_1},q_{n_0}) \in {\cal C}^2$ may be computed using two nested loops, where an outer loop walks the configuration points $q_{n_1}\! \in {\cal C}$ and $q_{n_0\!} \in {\cal C}$, while an inner loop regards $q_{n_1}$ and $q_{n_0}$ as being fixed, repeatedly draws a random Feynman path $\gamma$ from the orbit of Feynman spindles $\Gamma(q_{n_1},\tau_{n_1};G_*q_{n_0},\tau_{n_0}) \defeq \bigcup_{\,\pi\in G_*\!}\Gamma(q_{n_1},\tau_{n_1};\pi q_{n_0},\tau_{n_0})$ and integrates the signed Wiener density $(-1)^{\pi} W(\gamma)$. The cancellation of positive and negative amplitudes severely degrades the efficacy of such direct Monte Carlo integration of a signed measure. It turns out that any Feynman path crossing or touching the boundary $\partial{\cal T}(;q_{n_0},\tau_{n_0})$ of the nodal tube ${\cal T}(;q_{n_0},\tau_{n_0})$ belongs to an orbit of {\it post-tethered Feynman spindles} whose singed amplitude contributions cancel exactly.

\vspace{-1.0ex}
\begin{figure}[ht]
\centering
\includegraphics [width=0.6\textwidth] {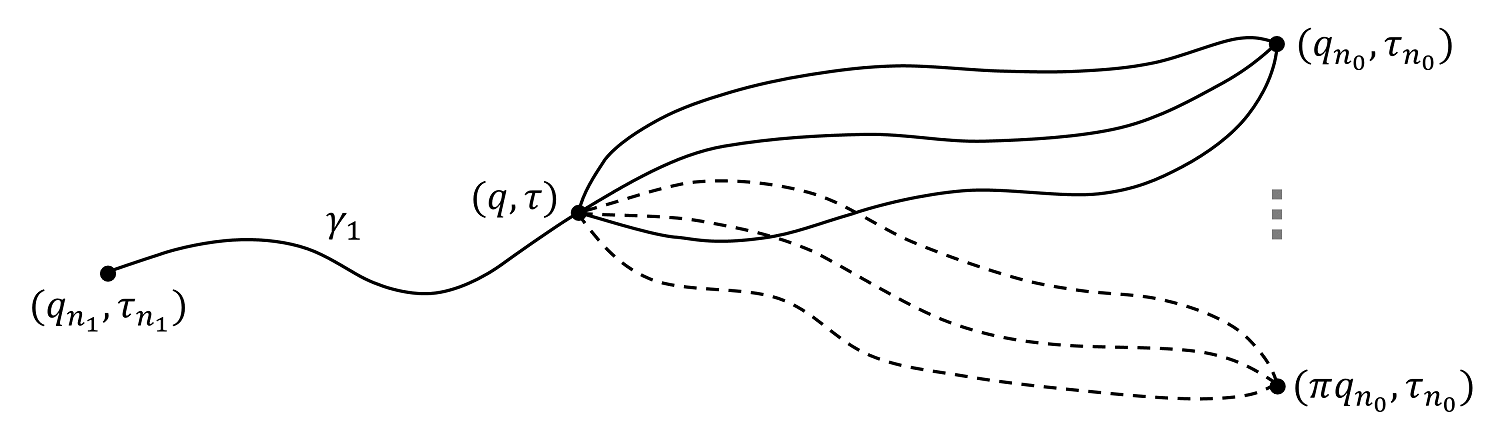}
\vspace{-1.5ex}
\captionsetup{font={small}}
\captionsetup{width=0.9\textwidth}
\captionsetup{justification=centering}
\caption{An orbit of Feynman spindles from an orbit of start points $(G_*q_{n_0},\tau_{n_0})$ to a midpoint $(q,\tau)$, that is post-tethered by a segment of Feynman path $\gamma_1$ from the midpoint to an end point $(q_{n_1},\tau_{n_1})$.}
\label{FigFeynmanSpindlePostTether}
\end{figure}
\vspace{2.0ex}

Fig.~\ref{FigFeynmanSpindlePostTether} illustrates an orbit of post-tethered Feynman spindle, which is a set of Feynman paths that come from an orbit of start points $(G_*q_{n_0},\tau_{n_0})$ to a midpoint $(q,\tau)$ via all the different ways, then share a common segment of Feynman path $\gamma_1$ from the midpoint to an end point $(q_{n_1},\tau_{n_1})$. For each $\pi \in G_*$, the set of concatenated Feynman paths $\gamma_1 * \Gamma(q,\tau;\pi q_{n_0},\tau_{n_0}) \,\defeq\, \{\gamma_1 * \gamma_0 : \gamma_0 \in \Gamma(q,\tau;\pi q_{n_0},\tau_{n_0})\}$ constitutes one {\it post-tethered Feynman spindle}, which yields a non-negative definite Wiener measure
\begin{equation}
\scalebox{1.25}{$\int$}_{\!\gamma_0\in\Gamma(q,\tau;\pi q_{n_0},\tau_{n_0})} \, W(\gamma_1) \, W(\gamma_0) \, d\gamma_0 \,=\, W(\gamma_1) \, \rho\kern0.1em(q,\tau;\pi q_{n_0},\tau_{n_0}) \,.
\end{equation}
With $\pi$ traversing the group $G_*$, or only the subgroup $G_{n_1}(q_{n_0})$ indeed, the orbit of post-tethered Feynman spindles $\gamma_1 * \Gamma(q,\tau;G_*q_{n_0},\tau_{n_0}) \defeq \{\gamma_1 * \Gamma(q,\tau;\pi q_{n_0},\tau_{n_0})\}_{\pi\in G_*}$ is enumerated, with the corresponding Wiener measures signed accordingly and summed up to yield a fermionic transition amplitude
\begin{equation}
\scalebox{1.25}{$\sum$}_{\,\pi\in G_*} \, (-1)^{\pi} \, W(\gamma_1) \, \rho\kern0.1em(q,\tau;\pi q_{n_0},\tau_{n_0}) \,=\, W(\gamma_1) * \rho\kern0.1em(q,\tau;{\cal F}q_{n_0},\tau_{n_0}) \,,
\end{equation}
which becomes exactly zero when $(q,\tau) \in \partial{\cal T}(;q_{n_0},\tau_{n_0})$. Therefore, to compute the fermionic Gibbs kernel $\rho\kern0.1em(q_{n_1},\tau_{n_1};{\cal F}q_{n_0},\tau_{n_0})$, it is sufficient to integrate over the set of ${\cal T}(;q_{n_0},\tau_{n_0})$-restricted Feynman paths
\begin{equation}
\Gamma_{\mathsmaller{\in}}(q_{n_1},\tau_{n_1};q_{n_0},\tau_{n_0}) \,\defeq\, \{\gamma(\tau) \subseteq {\cal T}(;q_{n_0},\tau_{n_0}) : \tau \in [\tau_{n_0},\tau_{n_1}], \, \gamma(\tau_{n_0}) = q_{n_0}, \, \gamma(\tau_{n_1}) = q_{n_1}\} \,,
\end{equation}
to obtain a {\it forward restricted path integral}
\begin{align}
\rho_{\mathsmaller{\in}}(q_{n_1},\tau_{n_1};q_{n_0},\tau_{n_0}) \,\defeq\,& \scalebox{1.25}{$\int$}_{\!\gamma\in\Gamma_{\mathsmaller{\in}}(q_{n_1},\tau_{n_1};q_{n_0},\tau_{n_0})} \, W(\gamma) \, d\gamma \label{RhoIn} \\[0.75ex]
\,=\,\;& \left\{ \begin{array}{rl}
\rho\kern0.1em(q_{n_1},\tau_{n_1};{\cal F}q_{n_0},\tau_{n_0}) \,, & \forall (q_{n_1},\tau_{n_1}) \in {\cal T}(;q_{n_0},\tau_{n_0}) \,, \\
0 \,, & \mbox{otherwise} \,.
\end{array} \right. \nonumber
\end{align}
$\Gamma_{\mathsmaller{\in}}(q_{n_1},\tau_{n_1};q_{n_0},\tau_{n_0})$ is called a {\it forward restricted Feynman spindle} connecting $(q_{n_0},\tau_{n_0})$ and $(q_{n_1},\tau_{n_1})$, which comprises Feynman paths that never cross or touch the boundary $\partial{\cal T}(;q_{n_0},\tau_{n_0})$.

By the identity $\rho\kern0.1em({\cal F}q_{n_1},\tau_{n_1};q_{n_0},\tau_{n_0}) = \rho\kern0.1em(q_{n_1},\tau_{n_1};{\cal F}q_{n_0},\tau_{n_0})$, $\forall (q_{n_1},q_{n_0}) \in {\cal C}^2$, a post-symmetrized fermionic Gibbs kernel $\rho\kern0.1em(;{\cal F}\cdot,\tau_{n_1}\cdot,\tau_{n_0})$ can always be computed via the equivalent pre-symmetrized $\rho\kern0.1em(\cdot,\tau_{n_1};{\cal F}\cdot,\tau_{n_0})$ using the forward restricted path integral. Alternatively, one may invoke the backward nodal cell ${\cal T}(q_{n_1},\tau_{n_1};)$ and consider an orbit of {\it pre-tethered Feynman spindles} $\Gamma(G_*q_{n_1},\tau_{n_1};q,\tau) * \gamma_0 \defeq \{\Gamma(\pi q_{n_1},\tau_{n_1};q,\tau;) * \gamma_0 :$ $\pi\in G_*\}$, $(q,\tau) \in {\cal C}\times(-\infty,\tau_{n_1})$ as depicted in Fig.~\ref{FigFeynmanSpindlePreTether}, so to establish the sufficiency of integrating over the set of ${\cal T}(q_{n_1},\tau_{n_1};)$-restricted Feynman paths
\begin{equation}
\Gamma_{\mathsmaller{\ni}}(q_{n_1},\tau_{n_1};q_{n_0},\tau_{n_0}) \,\defeq\, \{\gamma(\tau) \subseteq {\cal T}(q_{n_1},\tau_{n_1};) : \tau \in [\tau_{n_0},\tau_{n_1}], \, \gamma(\tau_{n_0}) = q_{n_0}, \, \gamma(\tau_{n_1}) = q_{n_1}\} \,,
\end{equation}
to obtain a {\it backward restricted path integral}
\begin{align}
\rho_{\mathsmaller{\ni}}(q_{n_1},\tau_{n_1};q_{n_0},\tau_{n_0}) \,\defeq\,& \scalebox{1.25}{$\int$}_{\!\gamma\in\Gamma_{\mathsmaller{\ni}}(q_{n_1},\tau_{n_1};q_{n_0},\tau_{n_0})} \, W(\gamma) \, d\gamma \label{RhoNi} \\[0.75ex]
\,=\,\;& \left\{ \begin{array}{rl}
\rho\kern0.1em({\cal F}q_{n_1},\tau_{n_1};q_{n_0},\tau_{n_0}) \,, & \forall (q_{n_0},\tau_{n_0}) \in {\cal T}(q_{n_1},\tau_{n_1};) \,, \\
0 \,, & \mbox{otherwise} \,.
\end{array} \right. \nonumber
\end{align}
$\Gamma_{\mathsmaller{\ni}}(q_{n_1},\tau_{n_1};q_{n_0},\tau_{n_0})$ is called a {\it backward restricted Feynman spindle} connecting $(q_{n_0},\tau_{n_0})$ and $(q_{n_1},\tau_{n_1})$, which comprises Feynman paths that never cross or touch the boundary $\partial{\cal T}(q_{n_1},\tau_{n_1};)$.

\vspace{-1.0ex}
\begin{figure}[ht]
\centering
\includegraphics [width=0.6\textwidth] {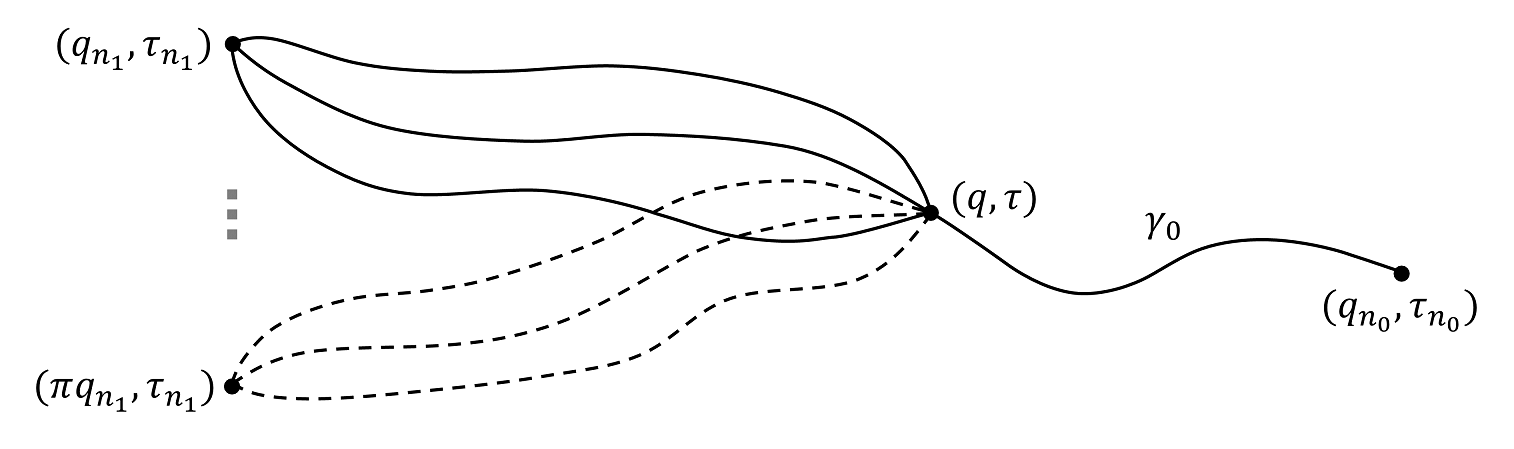}
\vspace{-1.5ex}
\captionsetup{font={small}}
\captionsetup{width=0.9\textwidth}
\captionsetup{justification=centering}
\caption{An orbit of Feynman spindles from a midpoint $(q,\tau)$ to an orbit of end points $(G_*q_{n_1},\tau_{n_1})$, that is pre-tethered by a segment of Feynman path $\gamma_0$ from a start point $(q_{n_0},\tau_{n_0})$ to the midpoint.}
\label{FigFeynmanSpindlePreTether}
\end{figure}
\vspace{2.0ex}

Therefore, for each single Feynman slab associated with a CFF interaction $H_{n_1\|\mathsmaller{K}}$, $n_1\in\mathbb{N}$, it is always easy to compute the fermionic Gibbs kernel $\rho\kern0.1em({\cal F}q_{n_1},\tau_{n_1};q_{n_0},\tau_{n_0}) = \rho\kern0.1em(q_{n_1},\tau_{n_1};{\cal F}q_{n_0},\tau_{n_0})$, $\forall(q_{n_1},q_{n_0})\in{\cal C}^2$, $\forall\tau_{n_1\!}>\tau_{n_0}$, $n_0=n_1-1$, since the cardinality of the group $G_{n_1}(q_{n_0})$ or $G_{n_1}(q_{n_1})$ is always upper-bounded by a small constant, and it is always easy to find a permutation $\pi\in G_*$ to satisfy $\pi q_{n_1} \in {\cal T}(;q_{n_0},\tau_{n_0})$ such that $\rho\kern0.1em(q_{n_1},\tau_{n_1};{\cal F}q_{n_0},\tau_{n_0}) = (-1)^{\pi} \rho\kern0.1em(\pi q_{n_1},\tau_{n_1};{\cal F}q_{n_0},\tau_{n_0}) = (-1)^{\pi} \rho_{\mathsmaller{\in}}(\pi q_{n_1},\tau_{n_1};q_{n_0},\tau_{n_0})$, or to fulfill $\pi q_{n_0} \in {\cal T}(q_{n_1},\tau_{n_1};)$ such that $\rho\kern0.1em({\cal F}q_{n_1},\tau_{n_1};q_{n_0},\tau_{n_0}) = (-1)^{\pi} \rho\kern0.1em({\cal F}q_{n_1},\tau_{n_1};\pi q_{n_0},\tau_{n_0}) = (-1)^{\pi} \rho_{\mathsmaller{\ni}}(q_{n_1},\tau_{n_1};\pi q_{n_0},\tau_{n_0})$, where $\rho_{\mathsmaller{\in}}(\pi q_{n_1},\tau_{n_1};q_{n_0},\tau_{n_0})$ or $\rho_{\mathsmaller{\ni}}(q_{n_1},\tau_{n_1};\pi q_{n_0},\tau_{n_0})$ can be efficiently simulated by Monte Carlo for the forward or backward restricted path integral, so long as the nodal surface $\partial{\cal T}(;q_{n_0},\tau_{n_0})$ or $\partial{\cal T}(q_{n_1},\tau_{n_1};)$ is known or efficiently computable. Indeed, for any CFF interaction $H_{n_1\|\mathsmaller{K}}$, $n_1\in\mathbb{N}$ and a fixed $q\in{\cal C}$, the eigen system of $H_{n_1\|\mathsmaller{K}}$ or an associated Gibbs operator $e^{-\tau H_{n_1\|\mathsmaller{K}}\!}$, $\tau>0$, restricted to the configuration space ${\cal C}_{n_1\|\mathsmaller{K}}(q)$, can be solved either analytically or numerically at a constant computational cost, which enables efficient computation of any related fermionic Gibbs kernel and nodal surfaces.

Now consider to compute the Gibbs transition amplitude $\rho\kern0.1em(q_{\mathsmaller{N}},\tau_{\mathsmaller{N}};{\cal F}q_0,\tau_0)$ for any given $(q_{\mathsmaller{N}},q_0)\in{\cal C}^2$ with respect to a full Feynman stack associated with a sequence of Gibbs operators $\{{\sf G}_n \defeq e^{-\delta\tau H_{n\|\mathsmaller{K}}}\}_{n\in[1,\,N]}$, $N\in\mathbb{N}$. By the identity $\rho\kern0.1em(\pi q_{\mathsmaller{N}},\tau_{\mathsmaller{N}};{\cal F}q_0,\tau_0) = (-1)^{\pi} \rho\kern0.1em(\pi q_{\mathsmaller{N}},\tau_{\mathsmaller{N}};{\cal F}q_0,\tau_0)$, $\forall(q_{\mathsmaller{N}},q_0)\in{\cal C}^2$, $\forall\pi\in G_*$ and the fact that the orbit of any nodal cell tiles up the configuration space ${\cal C}$ under the group action of $G_*$ \cite{Ceperley91,Wei20}, it is sufficient and WLOG to limit $q_{\mathsmaller{N}}$ to the nodal cell ${\cal N}(\tau_{\mathsmaller{N}};q_0,\tau_0)$, which is the connected component of the open set $\{q\in{\cal C} : \rho\kern0.1em(q,\tau_{\mathsmaller{N}};{\cal F}q_0,\tau_0) > 0\}$ that contains the point $q=q_0$. On the other hand, the identity $\rho\kern0.1em({\cal E}q_{\mathsmaller{N}},\tau_{\mathsmaller{N}};{\cal F}q_0,\tau_0) \defeq |A_*|^{-1}\sum_{\pi\in A_*} \rho\kern0.1em(\pi q_{\mathsmaller{N}},\tau_{\mathsmaller{N}};{\cal F}q_0,\tau_0) = \rho\kern0.1em(q_{\mathsmaller{N}},\tau_{\mathsmaller{N}};{\cal F}q_0,\tau_0)$, $\forall(q_{\mathsmaller{N}},q_0)\in{\cal C}^2$ can be used freely to multiply an point $q_{\mathsmaller{N}}$ into an orbit of end points $A_*q_{\mathsmaller{N}}$, all of which lead to the same-valued fermionic Gibbs transition amplitude. The same can be done for a start point. Such multiplication of an end or start point to its $A_*$-orbit is called {\it alternating broadcast}.

By Feynman's rule of amplitude multiplication for events occurring in succession \cite{Feynman65}, which may be viewed as a generalization of the Chapman-Kolmogorov equation in probability theory to signed densities, the fermionic Gibbs transition amplitude $\rho\kern0.1em(q_{\mathsmaller{N}},\tau_{\mathsmaller{N}};{\cal F}q_0,\tau_0)$ can be computed as
\begin{align}
& \rho\kern0.1em(q_{\mathsmaller{N}},\tau_{\mathsmaller{N}};{\cal F}q_0,\tau_0) \nonumber \\[0.75ex]
\,=\,\;& \scalebox{1.25}{$\int$}_{\!q_{1\,} \in\, {\cal C}} \; \rho\kern0.1em(q_{\mathsmaller{N}},\tau_{\mathsmaller{N}};q_1,\tau_1) \, \rho\kern0.1em(q_1,\tau_1;{\cal F}q_0,\tau_0) \, dq_1 \nonumber \\[0.75ex]
\,=\,\;& \scalebox{1.25}{$\int$}_{\!q_{1\,} \in\, G_*{\cal R}_1(q_0)} \; \rho\kern0.1em(q_{\mathsmaller{N}},\tau_{\mathsmaller{N}};q_1,\tau_1) \, \rho\kern0.1em(q_1,\tau_1;{\cal F}q_0,\tau_0) \, dq_1 \nonumber \\[0.75ex]
\,=\,\;& C_1(q_0) \, \scalebox{1.35}{$\sum$}_{\,\pi \in G_*} \, \scalebox{1.25}{$\int$}_{\!q_1 \in\, {\cal N}_1(q_0)} \; \rho\kern0.1em(q_{\mathsmaller{N}},\tau_{\mathsmaller{N}};\pi q_1,\tau_1) \, \rho\kern0.1em(\pi q_1,\tau_1;{\cal F}q_0,\tau_0) \, dq_1 \label{ChapKolmInsertF} \\[0.75ex]
\,=\,\;& C_1(q_0) \, \scalebox{1.35}{$\sum$}_{\,\pi \in G_*} \, \scalebox{1.25}{$\int$}_{\!q_1 \in\, {\cal N}_1(q_0)} \; \rho\kern0.1em(q_{\mathsmaller{N}},\tau_{\mathsmaller{N}};\pi q_1,\tau_1) \, (-1)^{\pi} \rho\kern0.1em(q_1,\tau_1;{\cal F}q_0,\tau_0) \, dq_1 \nonumber \\[0.75ex]
\,=\,\;& C_1(q_0) \, |G_*| \, \scalebox{1.25}{$\int$}_{\!q_1 \in\, {\cal N}_1(q_0)} \; \rho\kern0.1em(q_{\mathsmaller{N}},\tau_{\mathsmaller{N}};{\cal F}q_1,\tau_1) \, \rho\kern0.1em(q_1,\tau_1;{\cal F}q_0,\tau_0) \, dq_{1\,}, \nonumber
\end{align}
where ${\cal R}_1(q_0)$ is the boltzmannonic reach of $q_0$ by $H_{1\|\mathsmaller{K}}$, $G_*{\cal R}_1(q_0) \defeq \bigcup\,\{\pi\,{\cal R}_1(q_0):\pi\in G_*\}$ is the set of points that are boltzmannonically reachable from any point in the orbit $G_*q_0$, ${\cal N}_1(q_0) \defeq {\cal N}(\tau_1;q_0,\tau_0)$, the third equality follows from the fact that ${\cal N}_1(q_0)$ tiles up $G_*{\cal R}_1(q_0)$ under the group action of $G_*$ \cite{Ceperley91,Wei20}, and $C_1^{-1}(q_0) \defeq |G_*| \, V_g({\cal N}_1(q_0)) \,/\, V_g(G_*{\cal R}_1(q_0))$ is an integer counting how many times each point $q \in G_*{\cal R}_1(q_0)$ almost surely is covered by the orbit of nodal cells $\{\pi \, {\cal N}_1(q_0) : \pi \in G_*\}$, which is efficiently computable for any $q_0\in{\cal C}$ since $H_{1\|\mathsmaller{K}}$ is a CFF interaction. Then one uses alternating broadcast and proceeds as
\begin{align}
& \rho\kern0.1em(q_{\mathsmaller{N}},\tau_{\mathsmaller{N}};{\cal F}q_0,\tau_0) \,=\,
\rho\kern0.1em({\cal E}q_{\mathsmaller{N}},\tau_{\mathsmaller{N}};{\cal F}q_0,\tau_0) \nonumber \\[0.75ex]
\,=\,\;& C_1(q_0) \, |G_*| \, \scalebox{1.25}{$\int$}_{\!q_1 \in\, {\cal N}_1(q_0)} \; \rho\kern0.1em({\cal E}q_{\mathsmaller{N}},\tau_{\mathsmaller{N}};{\cal F}q_1,\tau_1) \, \rho\kern0.1em(q_1,\tau_1;{\cal F}q_0,\tau_0) \, dq_1 \nonumber \\[0.75ex]
\,=\,\;& C_1(q_0) \, |G_*| \, \scalebox{1.25}{$\int$}_{\!q_1\in\, {\cal N}_1(q_0)} \; \rho\kern0.1em({\cal E}q_{\mathsmaller{N}},\tau_{\mathsmaller{N}};{\cal F}q_1,\tau_1) \, \rho\kern0.1em({\cal E}q_1,\tau_1;{\cal F}q_0,\tau_0) \, dq_1 \\[0.75ex]
\,=\,\;& 2 \, C_1(q_0) \, \scalebox{1.35}{$\sum$}_{\,\pi_1 \in A_*} \, \scalebox{1.25}{$\int$}_{\!q_1 \in\, {\cal N}_1(q_0)} \; \rho\kern0.1em({\cal E}q_{\mathsmaller{N}},\tau_{\mathsmaller{N}};{\cal F}q_1,\tau_1) \, \rho\kern0.1em(\pi_1q_1,\tau_1;{\cal F}q_0,\tau_0) \, dq_1 \nonumber \\[0.75ex]
\,=\,\;& 2 \, C_1(q_0) \, \scalebox{1.35}{$\sum$}_{\,\pi_1 \in A_*} \, \scalebox{1.25}{$\int$}_{\!q_1 \in\, {\cal N}_1(\pi_1q_0)} \; \rho\kern0.1em({\cal E}q_{\mathsmaller{N}},\tau_{\mathsmaller{N}};{\cal F}q_1,\tau_1) \, \rho_{\mathsmaller{\in}}(q_1,\tau_1;\pi_1q_0,\tau_0) \, dq_{1\,}, \nonumber
\end{align}
so to have the first Feynman slab {\it path-rectified}, namely, represented by equivalent, non-negative definite, restricted path integrals $\rho_{\mathsmaller{\in}}(q_1,\tau_1;\pi_1q_0,\tau_0)$, each of which with a $\pi_1\in A_*$ and a $q_1 \in\, {\cal N}_1(\pi_1q_0)$ sums up non-negative Wiener densities of ${\cal T}(;\pi_1q_0,\tau_0)$-restricted paths.

It is straightforward to repeat the same procedure inductively and have all of the Feynman slabs path-rectified, so that $\rho\kern0.1em(q_{\mathsmaller{N}},\tau_{\mathsmaller{N}};{\cal F}q_0,\tau_0)$ becomes an integral of all non-negative definite contributions as
\begin{align}
& \rho\kern0.1em(q_{\mathsmaller{N}},\tau_{\mathsmaller{N}};{\cal F}q_0,\tau_0) \,=\,
\rho\kern0.1em({\cal E}q_{\mathsmaller{N}},\tau_{\mathsmaller{N}};{\cal F}q_0,\tau_0) \nonumber \\[0.75ex]
\,=\,\;& \scalebox{1.5}{$\{$} 2 \, C_n(q_{n-1}) \; \scalebox{1.35}{$\sum$}_{\,\pi_n \in A_*} \, \scalebox{1.25}{$\int$}_{\!q_n \in\, {\cal N}_n(\pi_nq_{n-1})} \scalebox{1.5}{$\}$}_{n=1}^{\mathsmaller{N}-1} \; \rho\kern0.1em({\cal E}q_{\mathsmaller{N}},\tau_{\mathsmaller{N}};{\cal F}q_{\mathsmaller{N}-1},\tau_{\mathsmaller{N}-1}) \nonumber \\[0.75ex]
\,\times\,\;& \scalebox{1.25}{$\prod$}_{n=1}^{\mathsmaller{N}-1} \, \rho_{\mathsmaller{\in}}(q_n,\tau_n;\pi_nq_{n-1},\tau_{n-1}) \; \scalebox{1.25}{$\prod$}_{n=1}^{\mathsmaller{N}-1} \, dq_n \label{FinalRPI} \\[0.75ex]
\,=\,\;& \scalebox{1.5}{$\{$} 2 \, C_n(q_{n-1}) \; \scalebox{1.35}{$\sum$}_{\,\pi_n \in A_*} \, \scalebox{1.25}{$\int$}_{\!q_n \in\, {\cal N}_n(\pi_nq_{n-1})} \scalebox{1.5}{$\}$}_{n=1}^{\mathsmaller{N}-1} \; \rho_{\mathsmaller{\in}}(q_{\mathsmaller{N}},\tau_{\mathsmaller{N}};\pi_{\mathsmaller{N}}q_{\mathsmaller{N}-1},\tau_{\mathsmaller{N}-1}) \nonumber \\[0.75ex]
\,\times\,\;& \scalebox{1.25}{$\prod$}_{n=1}^{\mathsmaller{N}-1} \, \rho_{\mathsmaller{\in}}(q_n,\tau_n;\pi_nq_{n-1},\tau_{n-1}) \; \scalebox{1.25}{$\prod$}_{n=1}^{\mathsmaller{N}-1} \, dq_{n\,}, \nonumber
\end{align}
where ${\cal N}_n(q_{n-1}) \defeq {\cal N}(\tau_n;q_{n-1},\tau_{n-1})$ denotes the nodal cell of $\rho\kern0.1em(\cdot,\tau_n;{\cal F}q_{n-1},\tau_{n-1})$ containing the point $q_{n-1}$, $C_n^{-1}(q_{n-1}) \defeq |G_*| \, V_g({\cal N}_n(q_{n-1})) \,/\, V_g(G_*{\cal R}_n(q_{n-1}))$ is an efficiently computable integer counting how many times each point $q \in G_*{\cal R}_n(q_{n-1})$ is covered by the orbit of nodal cells $\{\pi \, {\cal N}_n(q_{n-1}) : \pi \in G_*\}$, with ${\cal R}_n(q_{n-1})$ denoting the boltzmannonic reach of $q_{n-1}$ by $H_{n\|\mathsmaller{K}}$, for all $n\in[1,N]$ and any $q_{n-1}\in{\cal C}$. It is worth noting that all path rectifications in equation (\ref{FinalRPI}) are done on a per Feynman slab basis, only requiring a solution for the nodal surface of the CFF interaction associated with each Feynman slab, which can be obtained at no more than a polynomial computational cost to within a polynomial accuracy, because every CFF interaction moves no more than a small constant number of degrees of freedom around any given configuration point. An alternative derivation of equation (\ref{FinalRPI}) uses equation (\ref{ChapKolmInsertF}) repeatedly for each of the Feynman slabs indexed by $n\in[1,N{-}1]$ and takes advantage of the idempotency property of the exchange symmetrization operator to insert an ${\cal F}$ to the start point of each of the Gibbs transition amplitudes associated with the Feynman slabs \cite{Dornheim15,Yan17} and obtain an identity
\begin{align}
\rho\kern0.1em(q_{\mathsmaller{N}},\tau_{\mathsmaller{N}};{\cal F}q_0,\tau_0) \,=\,\;& \scalebox{1.5}{$\{$} C_n(q_{n-1}) \; |G_*| \, \scalebox{1.25}{$\int$}_{\!q_n \in\, {\cal N}_n(q_{n-1})} \scalebox{1.5}{$\}$}_{n=1}^{\mathsmaller{N}-1} \; \rho\kern0.1em(q_{\mathsmaller{N}},\tau_{\mathsmaller{N}};{\cal F}q_{\mathsmaller{N}-1},\tau_{\mathsmaller{N}-1}) \\[0.75ex]
\,\times\,\;& \scalebox{1.25}{$\prod$}_{n=1}^{\mathsmaller{N}-1} \, \rho\kern0.1em(q_n,\tau_n;{\cal F}q_{n-1},\tau_{n-1}) \; \scalebox{1.25}{$\prod$}_{n=1}^{\mathsmaller{N}-1} \, dq_{n\,}, \nonumber
\end{align}
then path-rectifies each of the Feynman slabs, replaces $\rho\kern0.1em(q_n,\tau_n;{\cal F}q_{n-1},\tau_{n-1})$ by $\rho_{\mathsmaller{\in}}(q_n,\tau_n;q_{n-1},\tau_{n-1})$ in accordance with equation (\ref{RhoIn}), and uses alternating broadcast, for all $n\in[1,N{-}1]$.

Equation (\ref{FinalRPI}) represents $\rho\kern0.1em(q_{\mathsmaller{N}},\tau_{\mathsmaller{N}};{\cal F}q_0,\tau_0)$ by an RPI comprising restricted Feynman paths of the form $\gamma_{\mathsmaller{N}} * \gamma_{\mathsmaller{N}-1} * \cdots * \gamma_2 * \gamma_1$ with $\gamma_n \in \Gamma_{\mathsmaller{\in}}(q_n,\tau_n;q_{n-1},\tau_{n-1})$, $q_n \in\, {\cal N}_n(\pi_nq_{n-1})$, $\pi_n\in A_*$ for all $n\in[1,N]$, where the restricted Feynman paths appear to undergo abrupt coordinate jumps in the space ${\cal C}\times[\tau_0,\tau_{\mathsmaller{N}}]$ due to the frequent insertion of even permutations. In reality, such apparent coordinate jumps do not represent actual physical discontinuities, since all points in any orbit $A_*q$, $q \in {\cal C}$ are physically equivalent and represent the same physical reality. Indeed, the restricted Feynman paths are actually continuous in the space $({\cal C}/A_*)\times[\tau_0,\tau_{\mathsmaller{N}}]$, where ${\cal C}/A_*$ is an orbifold regarding each orbit $A_*q$, $q \in {\cal C}$ as a single point. A practical and effective means to incorporate such exchange equivalence is to have each point on a Feynman plane at time $\tau_n$ associated with a couple $(\pi_n,q_n)$, $n\in[0,N]$, which specify two equivalent coordinates $q_n\in{\cal C}$ and $\pi_nq_n\in A_*q_n$, with $q_n$ serving the Feynman stack or slice from $\tau_{n-1}$ to $\tau_n$, and $\pi_nq_n$ being used by the Feynman stack or slice from $\tau_n$ to $\tau_{n+1}$. An even permutation $\pi_n\in A_*$ could be interpreted as the effect of a fermionic Gibbs kernel associated with an infinitesimally thin Feynman slice \cite{Militzer00}.

The insertion of a finite number of Feynman planes, here being done naturally for a Feynman stack associated with a sequence of Gibbs operators $\{{\sf G}_n \defeq e^{-\delta\tau H_{n\|\mathsmaller{K}}}\}_{n\in[1,\,N]}$, $N\in\mathbb{N}$, turns a Feynman path integral (or functional integration) into a finite-dimensional integral over a so-called {\it cylinder set} ${\sf Cyl} \defeq \{(q_{\mathsmaller{N}},\cdots\!,q_n,\cdots\!,q_0) : q_{n\!} \in {\cal M}, \, \forall n\in[0,N]\}$ consisting of {\it cylinder points}, where ${\cal M}={\cal C}/A_*$ or ${\cal M}={\cal C}$ depending upon whether alternating broadcast is used, every cylinder point of the form $(q_{\mathsmaller{N}},\cdots\!,q_n,\cdots\!,q_0)$ actually represents {\it a series of connected Feynman spindles}, each of which as specified by a pair of consecutive coordinates $(q_n,q_{n-1})\in{\cal M}^2$, $n\in[1,N]$ integrates under a prescribed Wiener measure into a Gibbs transition amplitude $\rho\kern0.1em(q_n,\tau_n;{\cal F}q_{n-1},\tau_{n-1})$. A {\it restricted cylinder set} ${\sf ResCyl}$ is a subset of ${\sf Cyl}$ consisting of {\it restricted cylinder points} (RCPs) of the form $(q_{\mathsmaller{N}},\cdots\!,q_n,\cdots\!,q_0)$ subject to the constraint that $\forall n\in[1,N]$, $q_{n\!} \in {\cal N}_n(q_{n-1})$, or $q_{n\!} \in A_{*\,}{\cal N}_n(q_{n-1})$ when alternating broadcast is used, in which case a (restricted) cylinder point may be specified by a particular representative of the form $(\pi_{\mathsmaller{N}},q_{\mathsmaller{N}};\cdots\!;\pi_n,q_n;\cdots\!;\pi_0=1,q_0)$, with each couple $(\pi_n,q_n)\in\in{\cal C}/A_*$, $n\in[0,N]$ representing a configuration point on the quotient manifold ${\cal C}/A_*$.

Equations (\ref{RhoIn}), (\ref{RhoNi}), and (\ref{FinalRPI}) provide a general method for simulating any quantum SCFF system or Hamiltonian on a classical computer using Monte Carlo without a numerical sign problem. A Monte Carlo procedure may employ either a homogeneous Markov chain sampling an RCP as a whole from the cylinder set $({\cal C}/A_*)^{\mathsmaller{N}+1}$, or an inhomogeneous Markov chain driving a random walk in discrete time $n\in[0,N]$ over the configuration space ${\cal C}/A_*$, involving a Markov transition associated with $\rho_{\mathsmaller{\in}}(\cdot,\tau_n;\cdot,\tau_{n-1})$ for each $n\in[1,N]$, so to evolve an initial probability distribution ${\rm Pr}\kern0.1em(\pi_0=1,q_0)$, $(\pi_0,q_0) \in\in {\cal C}/A_*$ into a sequence of probability distributions $\{{\rm Pr}\kern0.1em(\pi_n,q_n;\cdots;\pi_0=1,q_0) : n\in[0,N]\}$ for Markov sample paths, with $P_0$ being essentially the positive part of a wavefunction $\psi_0$, while the probability ${\rm Pr}\kern0.1em(\pi_{\mathsmaller{N}},q_{\mathsmaller{N}};\cdots;\pi_0,q_0)$ of any Markov sample path $(\pi_{\mathsmaller{N}},q_{\mathsmaller{N}};\cdots;\pi_0,q_0)$ at the end is proportional to the Wiener density of $(\pi_{\mathsmaller{N}},q_{\mathsmaller{N}};\cdots;\pi_0,q_0)$ as an RCP. Also, a suitable boundary condition should be chosen for the Feynman planes at the two ends. One usual choice enforces periodicity by identifying the $0$-th and $N$-th Feynman planes as one and the same. Another frequent choice provides a known probability distribution in $q_0$ or $q_{\mathsmaller{N}}\in{\cal C}/A_*$ at each end.

In one exemplary embodiment of Markov chain Monte Carlo (MCMC) using a homogenous chain to sample RCPs subject to the periodic boundary condition, a suitable random coordinate $q_0 \in {\cal C}/A_*$ is chosen such that it extends into an initial RCP $\{q_n=q_0\}_{n=1}^{\mathsmaller{N}}$ as a suitable start point, then the RPI of equation (\ref{FinalRPI}) is approximated by iterating three steps of random moves to wiggle the RCP for a predetermined and ${\rm poly}({\rm size}({\cal C}))$-bounded number of times. At the start of each iteration, let $(\pi_{\mathsmaller{N}},q_{\mathsmaller{N}};\cdots\!;\pi_n,q_n;\cdots\!;\pi_0=1,q_0)$ denote the instantaneous RCP. The first step draws a random integer $n\in[1,N]$ uniformly to name the $n$-th component of the instantaneous RCP, that is a couple $(\pi_n,q_n) \in\in {\cal C}/A_*$ representing a configuration point on the $n$-th Feynman plane. Let the couples $(\pi_{(n-1)\|\mathsmaller{N}},q_{(n-1)\|\mathsmaller{N}}), \, (\pi_{(n+1)\|\mathsmaller{N}},q_{(n+1)\|\mathsmaller{N}}) \in\in {\cal C}/A_*$ be associated with the Feynman planes immediately before or after the $n$-th Feynman plane. The second step simply chooses a $\pi\in A_*$ uniformly and makes the substitutions $\pi_n \leftarrow \pi\pi_n$, $q_n \leftarrow \pi q_n$, $\pi_{(n+1)\|\mathsmaller{N}} \leftarrow \pi_{(n+1)\|\mathsmaller{N}}\pi^{-1}$. The third step  performs a random walk of $q_{n\!} \in {\cal C}$ using either Metropolis-Hastings or Gibbs sampling \cite{Metropolis53,Hastings70,Geman84} in accordance with a conditional probability
\begin{align}
{\rm Pr}\kern0.1em(q_n|\cdots) \,\defeq\,& {\rm Pr}\kern0.1em(q_n|\,\pi_{(n+1)\|\mathsmaller{N}},q_{(n+1)\|\mathsmaller{N}};\pi_n;q_{(n-1)\|\mathsmaller{N}}) \nonumber \\[0.75ex]
\,\propto\,\;& C_{(n+1)\|\mathsmaller{N}}(q_n) \, \rho_{\mathsmaller{\in}}(q_{(n+1)\|\mathsmaller{N}},\tau_{(n+1)\|\mathsmaller{N}};\pi_{(n+1)\|\mathsmaller{N}}q_n,\tau_n) \\[0.75ex]
\,\times\,\;& C_n(q_{(n-1)\|\mathsmaller{N}}) \, \rho_{\mathsmaller{\in}}(q_n,\tau_n;\pi_nq_{(n-1)\|\mathsmaller{N}},\tau_{(n-1)\|\mathsmaller{N}}) \,, \nonumber
\end{align}
which vanishes whenever $q_n \not\in {\cal N}_n(\pi_nq_{(n-1)\|\mathsmaller{N}})$ or $q_{n+1} \not\in {\cal N}_{(n+1)\|\mathsmaller{N}}(\pi_{(n+1)\|\mathsmaller{N}}q_n)$. In an exemplary sampler, a random coordinate $r_n$ is drawn according to the probability distribution ${\rm Pr}\kern0.1em(r_n|\cdots)$, $r_n \in {\cal C}$, then a substitution $q_n \leftarrow r_n$ is executed to update the coordinate. Note that the conditional probability ${\rm Pr}\kern0.1em(q_n|\cdots)$ is always efficiently computable, since all of the $C_{\cdot}(\cdot)$ and $\rho_{\mathsmaller{\in}}(\cdot)$ quantities involve CFF interactions and can be computed either analytically or numerically at a constant cost.

Another exemplary embodiment uses the method of reptation quantum Monte Carlo (RQMC) \cite{Baroni98arXiv,Baroni99prl,Carleo10,Wei20} and path integral to compute a mirror-symmetric sequence of Gibbs operators $\{{\sf G}_n\}_{n=1}^{\mathsmaller{2N}}$ with ${\sf G}_{n\!} \defeq e^{-\delta\tau H_n}$, $\forall n\in[1,N]$ and ${\sf G}_{n\!} \defeq {\sf G}_{\mathsmaller{2N}-n+1}$, $\forall n\in[N+1,2N]$, where $N=(m+1)K$, $K\in\mathbb{N}$, $m\in\mathbb{N}$, and the constant $\delta\tau \in (0,\infty)$ is sufficiently small such that $\forall l\in[0,m]$, the product operator $\prod_{\mathsmaller{n=lK+1}}^{\mathsmaller{(l+1)K}}{\sf G}_n$ is substantially the same as the Gibbs operator $\exp[-\delta\tau H(l)]$, with $H(l) \defeq \sum_{\mathsmaller{n=lK+1}}^{\mathsmaller{(l+1)K}}H_n$ being an LTK-decomposed SCFF Hamiltonian. With the boundary condition that assigns a probability density ${\rm Pr}\kern0.1em(q_0) = \max(0, \phi_0(q_0)) / D_0$ for all $q_0\in{\cal C}/A_*$ at the start on the first Feynman plane and similarly at the end, $D_0 \defeq \frac{1}{2} \int_{q\in{\cal C}/A*} |\phi_0(q)| \, dq$, $\phi_0\in L^2({\cal C}/A_*)$ being any given MFS wavefunction, the method of RQMC computes the expectation value of any $G_*$-invariant, $({\cal C}/G_*)$-diagonal operator $V$ under ${\sf G}_*|\phi_0\ra$, ${\sf G}_* \defeq \prod_{n=1}^{\mathsmaller{N}} {\sf G}_n$ through
\begin{equation}
\frac{\la\phi_0 | {\sf G}_*^+ V {\sf G}_* | \phi_0\ra} {\la\phi_0 | {\sf G}_*^+ {\sf G}_* | \phi_0\ra} \,=\, \frac{\int_{\xi\in{\sf ResCyl}} \phi_0(\xi[2N][1]) \, V(\xi[N][1]) \, W(\xi) \, d\xi} {\int_{\xi\in{\sf ResCyl}} \phi_0(\xi[2N][1]) \, W(\xi) \, d\xi} \,, \label{GVGExpValLTK}
\end{equation}
where $\xi \defeq (\pi_{\mathsmaller{2N}},q_{\mathsmaller{2N}};\cdots\!;\pi_n,q_n;\cdots\!;\pi_0=1,q_0)$ traverses the restricted cylinder set ${\sf ResCyl}$, and $\forall\xi\in{\sf ResCyl}$, $\forall n\in[0,2N]$, $\xi[n] \defeq (\pi_n,q_n)$ denotes the $n$-th component of $\xi$, $\xi[n][0] \defeq \pi_n$, $\xi[n][1] \defeq q_n$, $W(\xi)$ is the Wiener density at the RCP $\xi = (\pi_{\mathsmaller{2N}},q_{\mathsmaller{2N}};\cdots\!;\pi_n,q_n;\cdots\!;\pi_0,q_0)$ evaluated as
\begin{equation}
W(\xi) \,\defeq\, W(\pi_n,q_n;\cdots\!;\pi_0,q_0) \,\defeq\, \phi_0(q_0) \, \scalebox{1.25}{$\prod$}_{n=1}^{\mathsmaller{2N}} \scalebox{1.25}{$\{$} C_n(q_{n-1}) \, \rho_{\mathsmaller{\in}}(q_n,\tau_n;\pi_nq_{n-1},\tau_{n-1}) \scalebox{1.25}{$\}$} \,,
\end{equation}
where $\tau_n \defeq n\delta\tau$, $\forall n\in[0,2N]$, the $\int_{\xi\in{\sf ResCyl}} d\xi$ operation on the right-hand side of equation (\ref{GVGExpValLTK}) involves integrating $q_0$ over a nodal cell of $\phi_0$, then for each $n\in[1,2N]$, integrating $q_n$ over the nodal cell ${\cal N}_n(\pi_nq_{n-1})$ and summing $\pi_n$ over $A_*$. In practice, the integration $\int_{\xi\in{\sf ResCyl}} d\xi$ is of course approximated by summing up a finite number of RCPs obtained by importance sampling through an MCMC procedure.

Interestingly, the desired RCP samples can be generated by running an inhomogeneous Markov chain over the state space ${\cal C}/A_*$ when the CFF interactions satisfy a suitable condition. The inhomogeneous Markov chain starts with a random walker having an initial probability density ${\rm Pr}\kern0.1em(\pi_0=1,q_0)$, $(\pi_0,q_0) \in\in {\cal C}/A_*$, proceeds inductively in steps indexed by $n\in[1,2N]$ and records a sequence of configuration coordinates as a Markov sample path or trajectory of the random walker. At the beginning of each step $n\in[1,2N]$, the walker has reached a point $(\pi_{n-1},q_{n-1}) \in\in {\cal C}/A_*$ via a Markov sample path $(\pi_{n-1},q_{n-1};\cdots\!;\pi_0=1,q_0)$, which is associated with a probability density ${\rm Pr}\kern0.1em(\pi_{n-1},q_{n-1};\cdots\!;\pi_0,q_0)$. The walker then undergoes a Markov transition from the present position $(\pi_{n-1},q_{n-1})$ to a new coordinate $(\pi_n,q_n) \in\in {\cal C}/A_*$, $q_n \in {\cal N}_n(\pi_nq_{n-1})$ in accordance with a Markov transition probability
\begin{equation}
{\rm Pr}\kern0.1em(q_{n\,}|\,\pi_nq_{n-1}) \;=\; \frac{C_n(q_{n-1}) \, \rho_{\mathsmaller{\in}}(q_n,\tau_n;\pi_nq_{n-1},\tau_{n-1})} {D_n(q_{n-1}) \,\defeq\, \scalebox{1.2}{$\{$} C_n(q_{n-1}) \int \rho_{\mathsmaller{\in}}(r_n,\tau_n;\pi_nq_{n-1},\tau_{n-1}) \, dr_n \scalebox{1.2}{$\}$}} \,, \label{MarkovTransProb}
\end{equation}
so to extend the Markov sample path into $(\pi_n,q_n;\cdots\!;\pi_0,q_0)$ with a probability
$${\rm Pr}\kern0.1em(\pi_n,q_n;\cdots\!;\pi_0,q_0) \,=\, {\rm Pr}\kern0.1em(q_{n\,}|\,\pi_nq_{n-1}) \, {\rm Pr}\kern0.1em(\pi_{n-1},q_{n-1};\cdots\!;\pi_0,q_0) \,,$$
where $D_n(q_{n-1})$ is called the {\it amplitude integral} of $\rho\kern0.1em(\cdot,\tau_n;\pi_nq_{n-1},\tau_{n-1}) = \la\cdot|e^{-(\tau_n-\tau_{n-1})H_n}|q_{n-1}\ra$ over the nodal cell containing $\pi_nq_{n-1}\in{\cal C}$. At the end, the inhomogeneous Markov chain generates a Markov sample path $(\pi_{\mathsmaller{2N}},q_{\mathsmaller{2N}};\cdots\!;\pi_0,q_0)$ with an associated probability density
\begin{equation}
{\rm Pr}\kern0.1em(\pi_{\mathsmaller{2N}},q_{\mathsmaller{2N}};\cdots\!;\pi_0,q_0) \,=\, {\rm Pr}\kern0.1em(\pi_0,q_0) \; {\textstyle{\prod_{n=1\,}^{\mathsmaller{2N}}}} {\rm Pr}\kern0.1em(q_{n\,}|\,\pi_nq_{n-1}) \,=\, \frac{W(\pi_{\mathsmaller{2N}},q_{\mathsmaller{2N}};\cdots\!;\pi_0,q_0)} {\prod_{n=1}^{\mathsmaller{2N}} D_n(q_{n-1})} \,, \label{PvsW}
\end{equation}
which is substantially the same as the Wiener density $W(\pi_{\mathsmaller{2N}},q_{\mathsmaller{2N}};\cdots\!;\pi_0,q_0)$ at $(\pi_{\mathsmaller{2N}},q_{\mathsmaller{2N}};\cdots\!;\pi_0,q_0)$ as an RCP representing {\it a series of restricted Feynman spindles}, when the sequence of Gibbs operators $\{{\sf G}_n\}_{n=1}^{\mathsmaller{N}}$ is amplitude integral-balanced as defined below, such that, running said inhomogeneous Markov chain repeatedly and independently generates a polynomial number of Markov sample paths that are effectively RCPs and can be used to estimate to within a polynomial accuracy the expectation value of any $G_*$-invariant, $({\cal C}/G_*)$-diagonal observable $V$ according to equation (\ref{GVGExpValLTK}).

\begin{definition}{} \label{defiAIB}
A sequence of Gibbs operators $\{{\sf G}_n \defeq e^{-\tau H_n}\}_{n=1}^{\mathsmaller{N}}$, $N\in\mathbb{N}$, $\tau\in(0,\infty)$ in association with a sequence of CFF interactions $\{H_n\}_{n=1}^{\mathsmaller{N}}$ is said to be amplitude integral-balanced (AIB), when the product $\prod_{n=1}^{\mathsmaller{N}} D_n(q_{n-1})$, with $D_n(q_{n-1})$ being defined as in equation (\ref{MarkovTransProb}), always evaluates into the same constant for any RCP $(\pi_{\mathsmaller{N}},q_{\mathsmaller{N}};\cdots\!;\pi_0=1,q_0)$ with $(\pi_0,q_0) \in\in {\cal C}/A_*$, $\pi_n \in A_*$, $q_n \in {\cal N}_n(\pi_nq_{n-1})$, $\forall n\in[1,N]$.
\iftoggle{UseREVTEX}{}{\vspace{-1.5ex}}
\end{definition}

Yet another exemplary embodiment uses RQMC and path integral to compute a mirror-symmetric sequence of Gibbs operators $\{{\sf G}_n\}_{n=1}^{\mathsmaller{2N}}$ with ${\sf G}_{n\!} \defeq e^{-\tau H_n}$, $\forall n\in[1,N]$ and ${\sf G}_{n\!} \defeq {\sf G}_{\mathsmaller{2N}-n+1}$, $\forall n\in[N+1,2N]$, where $N=(m+1)m_0K$, $K\in\mathbb{N}$, $m_0\in\mathbb{N}$, $m\in\mathbb{N}$, $H_{lm_0\mathsmaller{K}+n} = H_{lm_0\mathsmaller{K}+(n\|\mathsmaller{K})}$ for all $l\in[0,m]$ and for all $n\in[1,m_0K]$, the constant $\tau = O({\rm poly}({\rm size}({\cal C}))) \in (0,\infty)$ is no longer small but sufficiently large such that $\forall n\in[1,N]$, ${\sf G}_n$ is essentially the same as $\mathit{\Pi}_n = \lim_{\,t\rightarrow\infty}e^{-t[H_n-\lambda_0(H_n)]}$ up to an error that is exponentially small, while $m_0 = O({\rm poly}({\rm size}({\cal C})))$ is sufficiently large such that $\forall l\in[0,m]$, ${\sf G}(l) \defeq \prod_{n=lm_0\mathsmaller{K}+1}^{(l+1)m_0\mathsmaller{K}}{\sf G}_n$ is essentially the same as $\mathit{\Pi}(l) = \lim_{\,t\rightarrow\infty}e^{-t\{H(l)-\lambda_0[H(l)]\}}$ up to a constant $A_{m_0}>0$ and an error that is $O(1/{\rm poly}(m_0))$, with $H(l) \defeq \sum_{\,k=lm_0\mathsmaller{K}+1}^{\,(l+1)m_0\mathsmaller{K}} H_k$ being a GSP-decomposed SCFF Hamiltonian. If the sequence of Gibbs operators $\{{\sf G}_n\}_{n=1}^{\mathsmaller{N}}$ is AIB, then it is efficiently simulatable via Monte Carlo in exactly the same manner as described above when dealing with LTK-decomposed Hamiltonians.

\begin{theorem}{} \label{TheoremOne}
Let $\{{\sf G}_n \defeq e^{-\tau H_n}\}_{n=1}^{\mathsmaller{N}}$, $N\in\mathbb{N}$, $\tau\in(0,\infty)$ be an AIB sequence of Gibbs operators associated with a sequence of SCFF Hamiltonians $\{H_n\}_{n=1}^{\mathsmaller{N}}$ supported by a compact configuration space ${\cal C}$ and satisfying $\lambda_1(H_n)-\lambda_0(H_n) = \Omega(1/{\rm poly}({\rm size}({\cal C})))$, $\forall n\in[1,N]$. There is a fully polynomial randomized approximation scheme (FPRAS) \cite{Motwani95} to estimate $\la V\ra \defeq \la\phi_0|{\sf G}_*^+V{\sf G}_*|\phi_0\ra / \la\phi_0|{\sf G}_*^+{\sf G}_*|\phi_0\ra > 0$ with ${\sf G}_* \defeq \prod_{n=1}^{\mathsmaller{N}}{\sf G}_n$, for any given MFS wavefunction $\phi_0\in L^2({\cal C}/A_*)$ and any $G_*$-invariant, $({\cal C}/G_*)$-diagonal operator $V\ge 0$.
\iftoggle{UseREVTEX}{}{\vspace{-1.5ex}}
\end{theorem}

\begin{proof}[Proof]
As specified above, running an inhomogeneous Markov chain for an $O({\rm poly}({\rm size}({\cal C}),\epsilon^{-1}))$ number of times can generate a sufficient number of RCPs to produce an estimate $V_*\in\mathbb{R}$ according to equation (\ref{GVGExpValLTK}), such that ${\rm Pr} \{ |(V_*-\la V\ra) / \la V\ra| < \epsilon \} > 2/3$. For any $n\in[1,N]$, any $(\pi_n,q_n,q_{n-1}) \in A_*\times{\cal C}^2$, since $H_n$ is an CFF interaction with $\lambda_1(H_n)-\lambda_0(H_n) = \Omega(1/{\rm poly}({\rm size}({\cal C})))$, the Markov transition probability ${\rm Pr}\kern0.1em(q_{n\,}|\,\pi_nq_{n-1})$ of equation (\ref{MarkovTransProb}) is always efficiently computable with an $O({\rm poly}(\epsilon))$ accuracy at an $O({\rm poly}(\epsilon^{-1}))$ cost either analytically or using a deterministic or randomized numerical routine, the mixing time of each said Markov transition moving no more than a small constant number of degrees of freedom is always $O({\rm poly}({\rm size}({\cal C}),\,\epsilon^{-1}))$-bounded. The overall runtime is clearly $O({\rm poly}({\rm size}({\cal C}),N,\epsilon^{-1}))$.
\iftoggle{UseREVTEX}{}{\vspace{-1.5ex}}
\end{proof}

A particular application of RQMC and \myTheorem~\ref{TheoremOne} is to simulate the ground state of a given Hamiltonian, where an AIB sequence of Gibbs operators $\{{\sf G}_n \defeq e^{-\tau H_n}\}_{n=1}^{\mathsmaller{N}}$, $N\in\mathbb{N}$, $\tau\in(0,\infty)$ is associated with sequence of SCFF Hamiltonians $\{H(l)\}_{l=0}^m$, $m\in\mathbb{N}$, which evolves adiabatically from an initial Hamiltonian $H(0)$ with a known non-degenerate ground state $\phi_0\defeq\psi_0(H(0))$ to a final Hamiltonian $H(m)$ \cite{Wei20} whose ground state $\phi_m\defeq\psi_0(H(m))$ is of interest, where each $H(l)$, $l\in[0,m]$ is an SCFF Hamiltonian with a non-degenerate and polynomial-gapped ground state, whose defining form sum $H(l) = \sum_{k=1}^{\mathsmaller{K}}H_{\mathsmaller{lm_0K+k}}$ is LTK-decomposed with $m_0=1$ or GSP-decomposed with $m_0=O({\rm poly}({\rm size}({\cal C})))\in\mathbb{N}$, while $H_{lm_0\mathsmaller{K}+n} = H_{lm_0\mathsmaller{K}+(n\|\mathsmaller{K})}$ for all $l\in[0,m]$ and all $n\in[1,m_0K]$ in both cases. Clearly, $N\defeq(m+1)m_0K$. For an LTK-decomposition, the time constant $\tau$ is sufficiently small  such that the difference between $e^{-\tau H(l)}$ and $\prod_{k=1}^{\mathsmaller{K}}e^{-\tau H_{\mathsmaller{lK+k}}}$ is sufficiently small for all $l\in[0,m]$, whereas for a GSP-decomposition, the time constant $\tau$ is sufficiently large such that the Gibbs operator ${\sf G}_n$ is exponentially close to $\mathit{\Pi}_n = \lim_{\,t\rightarrow\infty}e^{-t[H_n-\lambda_0(H_n)]}$ for all $n\in[1,N]$, while $m_0$ is sufficiently large such that ${\sf G}(l) \defeq \prod_{n=lm_0\mathsmaller{K}+1}^{(l+1)m_0\mathsmaller{K}}{\sf G}_n$ is essentially the same as $\mathit{\Pi}(l) = \lim_{\,t\rightarrow\infty}e^{-t\{H(l)-\lambda_0[H(l)]\}}$ up to a constant for all $l\in[0,m]$. Finally, $m$ is chosen sufficiently large such that $\|H(l+1)-H(l)\| = O(1/{\rm poly}(m))$ is sufficiently small comparing to $\lambda_1[H(l)]-\lambda_0[H(l)]$ for all $l\in[0,m]$. As such, the final $H(m)$ is called an {\it adiabatic-reachable SCFF Hamiltonian}. \myTheorem~\ref{TheoremOne} guarantees an FPRAS for simulating the ground state $\phi_m$ of an adiabatic-reachable SCFF Hamiltonian $H(m)$, producing a good estimate for the expectation value $\la\phi_m|V|\phi_m\ra / \la\phi_m|\phi_m\ra$ of any $G_*$-invariant, $({\cal C}/G_*)$-diagonal observable $V\ge 0$.

Not only SCFF Hamiltonians can be simulated efficiently, but also they are universal for many-body physics and quantum computing. For any $\theta\in[-\pi,\pi)$, let ${\sf R}(\theta) \defeq I\cos\theta + XZ\sin\theta$ denote a {\it rotation gate} and $R(\theta) \defeq Z\cos\theta + X\sin\theta$ denote an {\it $R$-gate} \cite{Biamonte08,Wei20}, where $X \defeq \sigma_x$ and $Z \defeq \sigma_z$ are the familiar Pauli matrices acting on a single rebit as the simplest quantum system $({\cal C}_0,{\cal H}_0,{\cal B}_0)$, with ${\cal C}_0 \defeq \{0,1\}$, ${\cal H}_0 \defeq \{\alpha|0\ra + \beta|1\ra : \alpha,\beta\in\mathbb{R}\}$, ${\cal B}_0$ being the Banach algebra of $2\times 2$ real matrices. Define $Z^{\pm} \defeq (I \pm Z)/2$. It is well known that {\it controlled rotation gates} of the form $I\otimes Z^{+\!} + {\sf R}(\theta_0)\otimes Z^-$ all using the same angle $\theta_0\in[-\pi,\pi)$ are already universal for quantum computation, when $\theta_0/\pi$ is irrational \cite{Rudolph02,Wei20}. In this presentation, any universal quantum algorithm is allowed to use any controlled rotation gate ${\sf U}(\theta) \defeq I\otimes Z^{+\!} + {\sf R}(\theta)\otimes Z^-$ with any angle $\theta\in[-\pi,\pi)$, each of which is realized by a {\it controlled $R$-gate} $R(\theta/2)\otimes Z^+ + R(-\theta/2)\otimes Z^-$ followed by a {\it free $R$-gate} $R(\theta/2) = R(\theta/2)\otimes I$. The following will show that any BQP algorithm given as an ordered sequence of free or controlled $R$-gates $\{U_t(\theta_t)\defeq R_{i(t)}(\theta_t)\mbox{\;or\;}R_{i(t)}(\theta_t)\otimes Z^+_{j(t)} + R_{i(t)}(-\theta_t)\otimes Z^-_{j(t)}\}_{t\in[1,\mathsmaller{T}]}$, $T\in\mathbb{N}$ on a quantum computer of $n\in\mathbb{N}$ rebits can be mapped to an LTK- or GSP-decomposed SCFF Hamiltonian generating an AIB sequence of Gibbs operators, where each $R$-gate $R_{i(t)}(\theta_t)$, $\theta_t\in[-\pi,\pi)$ acts on a rebit indexed by an $i(t)\in[1,n]$, and $Z^{\pm}_{j(t)}$ operate on a control rebit indexed by a $j(t)\in[1,n]$, $\forall t\in[1,T]$. Such a BQP algorithm, or its associated quantum circuit, is said to have a computational size $T+n$.

\begin{definition}{} \label{defiHomoIsophysics}
A homophysics $\mathfrak{M}:({\cal C}, {\cal H}, {\cal B}) \mapsto ({\cal C}', {\cal H}', {\cal B}')$ between two quantum systems with Hamiltonians $H \in {\cal B}$ and $H' \in {\cal B}'$ is an injective mapping that sends any subset ${\cal D} \subseteq {\cal C}$ to a unique ${\cal D}' \defeq \mathfrak{M}({\cal D}) \subseteq {\cal C}'$, maps any $\psi \in {\cal H}$ to a unique $\psi' \defeq \mathfrak{M}(\psi) \in {\cal H}'$, and sends any $Q \in {\cal B}$ to a unique $Q' \defeq \mathfrak{M}(Q) \in {\cal B}'$, such that ${\cal C} \supseteq {\cal D} \mapsto \mathfrak{M}({\cal D}) \subseteq {\cal C}'$ embeds the Boolean algebra of subsets \cite{Halmos74} of ${\cal C}$ into the Boolean algebra of subsets of ${\cal C}'$; 2) ${\cal H} \ni \psi \mapsto \mathfrak{M}(\psi) \in {\cal H}'$ embeds the Hilbert space ${\cal H}$ into ${\cal H}'$; 3) ${\cal B} \ni Q \mapsto \mathfrak{M}(Q) \in {\cal B}'$ embeds the Banach algebra ${\cal B}$ into ${\cal B}'$; 4) there exists a constant $c>0$, $c+c^{-1} = O({\rm poly}({\rm size}(H)))$, with which $\la\mathfrak{M}(\psi)|\mathfrak{M}(Q)|\mathfrak{M}(\phi)\ra = c\la\psi|Q|\phi\ra$ holds $\forall \psi, \phi \in {\cal H}$, $\forall Q \in {\cal B}$; 5) ${\rm size}(H) = O({\rm poly}({\rm size}(H')))$ and ${\rm size}(H') = O({\rm poly}({\rm size}(H)))$. A homophysics $\mathfrak{M}$ is called an isophysics when the mapping $\mathfrak{M}$ is also surjective.
\iftoggle{UseREVTEX}{}{\vspace{-1.5ex}}
\end{definition}

Firstly, it is useful to construct a {\it bi-fermion} system $({\cal C}_1,{\cal H}_1,{\cal B}_1)$ consisting of two non-interacting identical fermions moving on a circle $\mathbb{T} \defeq \mathbb{R}/2\mathbb{Z}$ \cite{Wei20}, governed by a single-particle Hamiltonian $-(1/2)\partial^{2\!}/\partial x^2 \,{+}\, V(x)$, $x \in \mathbb{T}$, with an external potential $V(x) = V_0 \Iver{d(x,0)\,{>}\,1-a_0} \,{-}\, V_0 \Iver{d(x,0)\,{<}\,a_0}$, $x\in[-1,1) \! \pmod 2 \simeq\mathbb{T}$, $a_0 = \gamma_0^{-1}$, $V_0 = \gamma_0^2$, $\gamma_0\gg 1$ being a large constant, where $d(x,y)$ denotes the geodesic distance between $x\in\mathbb{T}$ and $y\in\mathbb{T}$ along the circle, $\Iver{\cdot}$ is an Iverson bracket \cite{IversonBracketWiki} which returns a number valued to $1$ or $0$ depending on if the Boolean expression inside the bracket is true or false. When $\gamma_0$ is sufficiently large, the potential well and barrier become essentially Dirac deltas, $V(x) \simeq \gamma_0\delta(x+1) \,{-}\, \gamma_0\delta(x)$, $x\in[-1,1) \! \pmod 2$, such that a bi-fermion under a nominal Hamiltonian ${\sf H}_{\mathsmaller{\rm BF}} = (\gamma_0^2\,{-}\,\pi^2)/2 + \sum_{i=1}^2[-(1/2)\partial^{2\!}/\partial x_i^2 \,{+}\, V(x_i)]$, $(x_1,x_2) \in \mathbb{T}^2$ behaves like a rebit with two low-energy states
\begin{align}
\psi_+(x_1,x_2) \,=\,\;& (1\,{-}\,\pi_{12}) \sin\pi[d(x_1,0)\,{-}\,a_0] \, e^{-\gamma_0d(x_2,0)} , \; (x_1,x_2) \in [-1,1)^2 , \\[0.75ex]
\psi_-(x_1,x_2) \,=\,\;& (1\,{-}\,\pi_{12}) \sin\pi x_1 \, e^{-\gamma_0d(x_2,0)} , \; (x_1,x_2) \in [-1,1)^2 ,
\end{align}
that are degenerate at $E_0 = 0$, where $\pi_{12}$ is the fermion exchange operator swapping the particle labels $1$ and $2$. Choose $\alpha_0 = 2\gamma_0^{-1}\log\gamma_0$, then for all $x$ such that $d(x,0) \,>\, \alpha_0$, the amplitude of the single-particle bound state $|e^{-\gamma_0d(x,0)}| \,<\, \gamma_0^{-2}$, which is rather small. Construct potential functions
\begin{align}
{\sf X}(x_1,x_2) \,=\,\;& \gamma_0 \, (1\,{+}\,\pi_{12}) \Iver{\,d(x_1,0) \,{>}\, 1-a_0 \,\wedge\, d(x_2,0) \,{<}\, \alpha_0\,} - (\pi^2/4\gamma_0^2) \, , \\[0.75ex]
{\sf Z}^+(x_1,x_2) \,=\,\;& (1\,{+}\,\pi_{12}) \Iver{\,d(x_1,+1/2) \,{<}\, 1/2 \,\wedge\, d(x_1,0) \,{>}\, \alpha_0 \,\wedge\, d(x_2,0) \,{<}\, \alpha_0\,} , \\[0.75ex]
{\sf Z}^-(x_1,x_2) \,=\,\;& (1\,{+}\,\pi_{12}) \Iver{\,d(x_1,-1/2)<1/2 \,\wedge\, d(x_1,0) \,{>}\, \alpha_0 \,\wedge\, d(x_2,0) \,{<}\, \alpha_0\,} ,
\end{align}
$\forall (x_1,x_2) \in [-1,1)^2 \simeq \mathbb{T}^2$, which can be regarded as $\{1,\pi_{12}\}$-invariant, $\mathbb{T}^2$-diagonal operators. It is straightforward to verify that such a bi-fermion implements a rebit \cite{Wei20} through a homophysics $\mathfrak{M}_1 : ({\cal C}_0, {\cal H}_0, {\cal B}_0) \mapsto ({\cal C}_1, {\cal H}_1, {\cal B}_1)$ such that, with $|\pm\ra \defeq (|0\ra \pm |1\ra) / \sqrt{2}$,
\begin{align}
\mathfrak{M}_{1\,}(|\pm\ra \in {\cal H}_0) \,=\,\;& \psi_{\pm}(x_1,x_2) \in {\cal H}_{1\,} , \\[0.75ex]
\mathfrak{M}_{1\,}(X \in {\cal B}_0) \,=\,\;& (2\gamma_0^2/\pi^2) \, [{\sf H}_{\mathsmaller{\rm BF}} + {\sf X}(x_1,x_2)] \in {\cal B}_{1\,} , \\[0.75ex]
\mathfrak{M}_{1\,}(Z^{\pm\!} \in {\cal B}_0) \,=\,\;& \gamma_0{\sf H}_{\mathsmaller{\rm BF}} + {\sf Z}^{\pm}(x_1,x_2) \in {\cal B}_{1\,} .
\end{align}
Via linear combinations, the operators $\mathfrak{M}_1(X)$, $\mathfrak{M}_1(Z^+)$, $\mathfrak{M}_1(Z^-)$ generate all partial Hamiltonians that are of interest for quantum computing on a single bi-fermion, because ${\rm span}\{X,Z^+,Z^-\}$ contains all Hermitian elements in ${\cal B}_0$. It is noted in passing that, although it is preferred for the single-particle potential $V(x)$, $x \in \mathbb{T}$ to have a narrow and deep potential well around $x = 0$, approximating a fairly strong Dirac delta to localize one of the two fermions in a small neighborhood of $x = 0$, there is no practical necessity other than convenience of mathematical analysis, to require a steep potential barrier around $x=\pm 1$. Rather, it is perfectly fine to place a relatively wide and low potential barrier, as long as its width and height are chosen properly to be commensurate with the Delta-like potential well around $x=0$, such that the nominal bi-fermion Hamiltonian $H_{\mathsmaller{\rm BF}}$ defines a degenerate two-state Hilbert space implementing a rebit.

Next, it is straightforward to construct a homophysics $\mathfrak{M}_2 : ({\cal C}_0^2, {\cal H}_0^2, {\cal B}_0^2) \mapsto ({\cal C}_1^2, {\cal H}_1^2, {\cal B}_1^2)$, with ${\cal C}_i^2 \defeq {\cal C}_i \times {\cal C}_i$, ${\cal H}_i^2 \defeq {\cal H}_i \otimes {\cal H}_i$, ${\cal B}_i^2 \defeq {\cal B}_i \otimes {\cal B}_i$, $\forall i \in \{0,1\}$, so to implement a pair of interacting rebits using two bi-fermions conditioned and interacting through the following partial Hamiltonians,
\begin{align}
\mathfrak{M}_{2\,}(X_1\otimes Z_2^{\pm} ) \,=\,\;& (2\gamma_0^2/\pi^2) \, [\, {\sf H}_{\mathsmaller{\rm BF},1} + {\sf H}_{\mathsmaller{\rm BF},2} + {\sf X}(x_{11},x_{12}) \, {\sf Z}^{\pm}(x_{21},x_{22}) \,] \,, \label{CtrlX} \\[0.75ex]
\mathfrak{M}_{2\,}(Z_1^{\pm}\otimes Z_2^{\pm} ) \,=\,\;& \gamma_0 {\sf H}_{\mathsmaller{\rm BF},1} + \gamma_0 {\sf H}_{\mathsmaller{\rm BF},2} + {\sf Z}^{\pm}(x_{11},x_{12}) \, {\sf Z}^{\pm}(x_{21},x_{22}) \,, \label{CtrlZ}
\end{align}
where $\forall i \in \{1,2\}$, $X_i$, $Z_i^{\pm}$, $Z_i = Z_i^+ - Z_i^-$ are the $X$- and $Z$-gates on the $i$-th rebit, ${\sf H}_{\mathsmaller{\rm BF},i}$ is the nominal Hamiltonian of the $i$-th bi-fermion, $(x_{i1},x_{i2}) \in \mathbb{T}^2$ is the two-fermion configuration of the $i$-th bi-fermion. $X_1\otimes Z_2^{\pm}$ and $Z_1\otimes Z_2^{\pm}$ are called {\it single-rebit-controlled gates}, whose linear combinations include all single-rebit-controlled $R$ gates, which are already universal for ground state quantum computation (GSQC) \cite{Feynman85,Kitaev02,Mizel04,Kempe06,Wei20} in the sense that, using the so-called perturbative gadgets, up to an error tolerance $\epsilon > 0$, the low-energy physics of any system of $n\in\mathbb{N}$ rebits under a computationally $k$-local Hamiltonian, $k\in\mathbb{N}$ being a fixed number, can be homophysically mapped to the low-energy physics of another system of ${\rm poly}(n,\epsilon^{-1})$ rebits under a Hamiltonian that involves only one-body and two-body interactions, especially the controlled $R$-gates, whose operator norms are upper-bounded by ${\rm poly}(\epsilon^{-1})$ \cite{Biamonte08,Jordan08,Bravyi08prl,Cao15}. In particular, the ``{\it XX} from {\it XZ} gadget'' of Biamonte and Love \cite{Biamonte08} can be employed to effect homophysically an $X\otimes X$ interaction between a first and a second rebits through $X\otimes Z$ interactions with a zeroth rebit,
\begin{align}
I - X_1\otimes X_2 \;\,\HomoPhysR\,\;& \gamma_0^2 \, (I - X_0) + (I - X_1\otimes X_2) \nonumber \\[0.75ex]
\,\HomoPhysR\,\;& \gamma_0^2 \, (I - X_0) + \gamma_{0\,} (X_1 + X_2) \otimes Z_0 + 2I + O(\gamma_0^{-1}) \,, \label{XXfromXZ}
\end{align}
where $\HomoPhysR$ reads and stands for ``is homophysically mapped to'', $\gamma_0 \gg 1$ is a large constant. Then the linear combinations of $X\otimes X$ and $X\otimes Z$ include all two-rebit interactions of the form $X\otimes R(\theta)$, with $R(\theta) \defeq Z\cos\theta + X\sin\theta$, $\theta \in [-\pi,\pi)$. Alternatively, there is a special class of multi-rebit interactions called {\it multi-rebit-controlled gates} of the form $R_i(\theta)\otimes\prod_{j\in J}Z_j^{\pm}$, with $\theta \in [-\pi,\pi)$, $i$ indexing a rebit being operated upon, $J$ being a set indexing a fixed number of control rebits. Such a multi-rebit-controlled $R$-gate does not require a decomposition into two-rebit couplings, but can be implemented through a linear combination of the following homophysics,
\begin{align}
\mathfrak{M}_{\,} \scalebox{1.15}{$($} X_i\otimes {\textstyle{ \prod_{j\,\in J\,} }} Z_j^{\pm} \scalebox{1.15}{$)$} \,=\,\;& (2\gamma_0^2/\pi^2) \,\scalebox{1.15}{$[$}\, {\sf H}_{\mathsmaller{\rm BF},i} +\, {\textstyle{ \sum_{j\,\in J\,} }} {\sf H}_{\mathsmaller{\rm BF},j} +\, {\sf X}(x_{i1},x_{i2}) \, {\textstyle{ \prod_{j\,\in J\,} }} {\sf Z}^{\pm}(x_{j1},x_{j2}) \,\scalebox{1.15}{$]$}\,, \label{MultiCtrlX} \\[0.75ex]
\mathfrak{M}_{\,} \scalebox{1.15}{$($} Z_i^{\pm}\otimes {\textstyle{ \prod_{j\,\in J\,} }} Z_j^{\pm} \scalebox{1.15}{$)$} \,=\,\;& \gamma_0 {\sf H}_{\mathsmaller{\rm BF},i} +\, \gamma_0 \, {\textstyle{ \sum_{j\,\in J\,} }} {\sf H}_{\mathsmaller{\rm BF},j} +\, {\sf Z}^{\pm}(x_{i1},x_{i2}) \, {\textstyle{ \prod_{j\,\in J\,} }} {\sf Z}^{\pm}(x_{j1},x_{j2}) \,. \label{MultiCtrlZ}
\end{align}
At any rate, it has been established that any computationally $k$-local Hamiltonian $H$ involving $n\in\mathbb{N}$ rebits, with $k\in\mathbb{N}$ being a fixed number and $n$ a variable, can be homophysically implemented as an SCFF Hamiltonian $\mathfrak{M}(H)$ involving no more than ${\rm poly}(n,\epsilon^{-1})$ bi-fermions, such that the low-energy physics of $H$ and $\mathfrak{M}(H)$ are homophysical up to an error tolerance $\epsilon > 0$, where each CFF interaction in $\mathfrak{M}(H)$ moves no more than $k'\in\mathbb{N}$ bi-fermions, with $k'$ being another fixed number, and has an operator norm that is upper-bounded by ${\rm poly}(\epsilon^{-1})$, while all bi-fermions are mutually distinguishable entities.

Given a universal BQP algorithm $\{U_t(\theta_t)\defeq R_{i(t)}(\theta_t)\mbox{\;or\;}R_{i(t)}(\theta_t)\otimes Z^+_{j(t)} + R_{i(t)}(-\theta_t)\otimes Z^-_{j(t)}\}_{t\in[1,\mathsmaller{T}]}$, $T\in\mathbb{N}$, with each free or controlled $R$-gate $U_t(\theta_t)$, $\theta_t\in[-\pi,\pi)$ operating on an $i(t)$-th, $i(t)\in[1,n]$ and possibly a $j(t)$-th, $j(t)\in[1,n]$ rebits in an $n$-rebit {\it logic register} represented by $({\cal C}_{\mathsmaller{L}} \defeq \{0,1\}^n,{\cal H}_{\mathsmaller{L}},{\cal B}_{\mathsmaller{L}})$ as a quantum subsystem, $\forall t\in[1,T]$, where the successive applications of the free or controlled $R$-gates are meant to generate a series of quantum states $|\phi_t\ra_{\mathsmaller{L}\!} \defeq U_t |\phi_{t{-}1}\ra_{\mathsmaller{L}}$, $t\in[1,T]$, from a given initial state $|\phi_0\ra_{\mathsmaller{L}}$ till a computational result $|\phi_{\mathsmaller{T}}\ra_{\mathsmaller{L}} = \scalebox{1.15}{(} \prod_{t=1}^{\mathsmaller{T}}U_t \scalebox{1.15}{)} |\phi_0\ra_{\mathsmaller{L}}$ at the end, the celebrated Feynman-Kitaev construct \cite{Feynman85,Kitaev02,Kempe06,Wei20} introduces a {\it clock register} represented by $({\cal C}_{\mathsmaller{C}}, {\cal H}_{\mathsmaller{C}}, {\cal B}_{\mathsmaller{C}})$ as a quantum subsystem to support {\it clock states} $\{|t\ra_{\mathsmaller{C}}\}_{t\in[0,\mathsmaller{T}]} \subseteq {\cal H}_{\mathsmaller{C}}$, so that the clock and logic registers constitute a GSQC system represented by $({\cal C}_{\mathsmaller{C}\!} \times {\cal C}_{\mathsmaller{L}}, {\cal H}_{\mathsmaller{C}} \otimes {\cal H}_{\mathsmaller{L}}, {\cal B}_{\mathsmaller{C}} \otimes {\cal B}_{\mathsmaller{L}})$, on which the product states $\{|t\ra_{\mathsmaller{C}} |\phi_t\ra_{\mathsmaller{L}}\}_{t\in[0,\mathsmaller{T}]} \subseteq {\cal H}_{\mathsmaller{C}} \otimes {\cal H}_{\mathsmaller{L}}$ map and encode the entire computational history of the BQP algorithm. Then Feynman's clocked Hamiltonians $H_{{\rm Feyn},\,t} \defeq |t\ra_{\mathsmaller{C}} \la(t{-}1)|_{\mathsmaller{C}} \otimes U_t + |(t{-}1)\ra_{\mathsmaller{C}} \la t|_{\mathsmaller{C}} \otimes U_t$, $t\in[1,T]$ ensure that the associated quantum gates $U_t$, $t\in[1,T]$ are applied to the logic register in the correct order when the clock register undergoes transitions between what he called the {\it program counter sites} (namely, the clock states, also referred to as {\it clock sites}) $|t\ra_{\mathsmaller{C}}$, $t\in[1,T]$ \cite{Feynman85}. Finally, Kitaev's GSQC Hamiltonian (also called the {\it Feynman-Kitaev Hamiltonian}) $H_{\mathsmaller{\rm FK}} \defeq H_{\rm clock} + H_{\rm init} + H_{\rm prop}$ enforces computational constraints via energy penalties, with $H_{\rm clock}$ restricting the clock register to the manifold of ${\rm span}(\{|t\ra_{\mathsmaller{C}}\! : t\in[0,T]\})$, $H_{\rm init}$ setting the initial state, while $H_{\rm prop}$ performing the quantum computation as Feynman suggested, such that the ground state $\psi_0(H_{\mathsmaller{\rm FK}}) = (T\,{+}\,1)^{-1/2} \sum_{t=0}^{\mathsmaller{T}} |t\ra_{\mathsmaller{C}}|\phi_t\ra_{\mathsmaller{L}}$ is unique and polynomially gapped \cite{Kitaev02,Kempe06}. It is WLOG to assume that the initial state $|\phi_0\ra_{\mathsmaller{L}}$ is a ${\cal C}_{\mathsmaller{L}}$-coordinate eigenstate, because, otherwise, it must be preparable from a ${\cal C}_{\mathsmaller{L}}$-coordinate eigenstate by another BQP algorithm. It is convenient and WLOG to assume that all logic rebits are initially set to their $|1\ra$ state \cite{Rudolph02}.

There are several choices of encoding a clock register for the clock states $\{|t\ra_{\mathsmaller{C}}\}_{t\in[0,\mathsmaller{T}]}$ \cite{Nagaj10,Breuckmann14}. Take Kitaev's {\it domain wall clock} for example, which can be realized by a clock register consisting of $T\,{+}\,2$ rebits indexed by integers within $[0,T\,{+}\,1]$ and using $|t\ra_{\mathsmaller{C}} \defeq |1\ra_{\mathsmaller{C}}^{\otimes(t+1)}|0\ra^{\otimes(\mathsmaller{T}-t+1)}_{\mathsmaller{C}}$, $t \in [0,T]$, such that \cite{Kitaev02,Kempe06,Wei20}
\begin{align}
H_{\mathsmaller{\rm FK}} \,\defeq\;& \gamma_{0\,} H_{\rm clock\,} +\, \gamma_{0\,} H_{\rm init\,} +\, \scalebox{1.2}{$\sum$}_{t=1}^{\mathsmaller{T}} \, H_{{\rm prop},\,t} \,, \label{HamilFK}
\iftoggle{UseREVTEX}{\\}{\\[0.75ex]}
H_{\rm clock} \,\defeq\;& Z^+_{\mathsmaller{C},\,0} \,+\, Z^-_{\mathsmaller{C},\,\mathsmaller{T+1}\,} +\, \scalebox{1.2}{$\sum$}_{t=1}^{\mathsmaller{T}} \, Z^+_{\mathsmaller{C},\,t-1} \otimes Z^-_{\mathsmaller{C},\,t} \,, \label{HamilClock}
\iftoggle{UseREVTEX}{\\}{\\[0.75ex]}
H_{\rm init} \,\defeq\;& \scalebox{1.2}{$\sum$}_{i=1}^n \, Z^+_{\mathsmaller{C},\,1} \otimes Z^+_{\mathsmaller{L},\,i} \,, \label{HamilInit}
\iftoggle{UseREVTEX}{\\}{\\[0.75ex]}
H_{{\rm prop},\,t} \,\defeq\;& Z^-_{\mathsmaller{C},\,t-1} \otimes Z^+_{\mathsmaller{C},\,t+1} \otimes [I - X_{\mathsmaller{C},\,t} \otimes U_t(\theta_t)]\,, \; \forall t \in [1,T] \,, \label{HamilProp}
\end{align}
where $\theta_t\in[-\pi,\pi)$ for all $t\in[1,T]$, $Q^{\delta}_{\mathsmaller{C},\,t}$ means to apply a single-rebit operator $Q^{\delta}$ to the $t$-th rebit of the clock register, $\forall Q \in \{X,Z\}$, $\forall \delta \in \{+,-,\mbox{void}\}$, $\forall t \in [0,T\,{+}\,1]$, the energy constant $\gamma_0>0$ is sufficiently large but still $O({\rm poly}(T\,{+}\,n))$-bounded such that the system can only move in the ground state subspace of $H_{\rm clock}+H_{\rm init}$, any escape from $\psi_0(H_{\rm clock}+H_{\rm init})$ is exponentially suppressed and negligible. It is clear that ${\rm size}(H_{\mathsmaller{\rm FK}}) = O(T+n)$. With each of $Z^{\pm}_{\mathsmaller{C},\,t}$, $t\in[0,T+1]$, $Z^+_{\mathsmaller{L},\,i}$, $i\in[1,n]$, and $h_{{\rm prop},\,t} \defeq I - X_{\mathsmaller{C},\,t} \otimes U_t$, $t\in[1,T]$ regarded as a single operator, the Feynman-Kitaev Hamiltonian $H_{\mathsmaller{\rm FK}}$ is a sum of {\it few-body moving (FBM) tensor monomials} \cite{Wei20} as specified in equations (\ref{HamilClock}-\ref{HamilProp}), each of which is a tensor product of no more than three single operators that involves no more than five interacting rebits in total. For each $t \in [1,T]$, the single operator $h_{{\rm prop},\,t}$ and the FBM tensor monomial $H_{{\rm prop},\,t} = Z^-_{\mathsmaller{C},\,t-1} \otimes Z^+_{\mathsmaller{C},\,t+1} \otimes h_{{\rm prop},\,t}$ are called the $t$-th {\it free and controlled Feynman-Kitaev propagators} respectively.

It is straightforward to implement such an $H_{\mathsmaller{\rm FK}}$ into an $\mathfrak{M}(H_{\mathsmaller{\rm FK}})$ for a system of $2T\,{+}\,n\,{+}\,2$ bi-fermions, where each of the $T\,{+}\,2$ clock rebits and $n$ logic rebits corresponds to one unique bi-fermion, each of the FBM tensor monomials in equations (\ref{HamilClock}) and (\ref{HamilInit}) is mapped to an interaction among the corresponding bi-fermions as in equation (\ref{MultiCtrlZ}), while the remaining $T$ bi-fermions supply enough auxiliary rebits for perturbative gadgets to implement logic gates of the form $X\otimes R(\theta)$, $\theta\in[-\pi,\pi)$ for Feynman-Kitaev propagators. Such an $\mathfrak{M}(H_{\mathsmaller{\rm FK}}) = \sum_{k=1}^{\mathsmaller{K}} \mathfrak{M}(H_k)$, $K \defeq 2T+n+2$ as an SCFF Hamiltonian is both LTK-decomposed and GSP-decomposed, with each FBM tensor monomial $H_k$, $k\in[1,K]$ being either of the form $Z_1^{\pm}\otimes Z_2^{\pm}$ or a tensor product of an FBM tensor monomial $Z_1^{-\!}\otimes Z_2^+$ and a free Feynman-Kitaev propagator $h_{{\rm prop},\,t}$, $t\in[1,T]$, where $Z_1$ and $Z_2$ represent a $Z$-operator acting on a clock or logic rebit. The unique ground state of $\mathfrak{M}(H_{\mathsmaller{\rm FK}})$ can be simulated using Monte Carlo on a classical computer by repeating a sequence of Gibbs operators $\{\exp[-\tau\mathfrak{M}(H_k)]\}_{k=1}^{\mathsmaller{K}}$ with a suitable $\tau>0$ such that $\tau + \tau^{-1} = O({\rm poly}(T+n))$ for an $O({\rm poly}(T+n))$-bounded number of times to approximates a Gibbs operator associated with $\mathfrak{M}(H_{\mathsmaller{\rm FK}})$. It is also straightforward to construct an adiabatic sequence of Feynman-Kitaev Hamiltonians $\{H_{\mathsmaller{\rm FK}}(l) : l\in[0,m]\}$ that evolves from an initial $H_{\mathsmaller{\rm FK}}(0)$ with a known ground state $\psi_0(H_{\mathsmaller{\rm FK}}(0))$ to the final $H_{\mathsmaller{\rm FK}}(0) = H_{\mathsmaller{\rm FK}}$ of equation (\ref{HamilFK}). An exemplary embodiment uses an adiabatic sequence of Feynman-Kitaev Hamiltonians with the $l$-th Hamiltonian $H_{\mathsmaller{\rm FK}}(l) \defeq \gamma_{0\,} H_{\rm clock} + \gamma_{0\,} H_{\rm init} + \sum_{t=1}^{\mathsmaller{T}} H_{{\rm prop},\,t\,}(l)$, $H_{{\rm prop},\,t}(l) \defeq Z^-_{\mathsmaller{C},\,t-1} \otimes Z^+_{\mathsmaller{C},\,t+1} \otimes h_{{\rm prop},\,t\,}(l)$, and $h_{{\rm prop},\,t\,}(l) \defeq I - X_{\mathsmaller{C},\,t} \otimes U_t(l\theta_t/m)$, $\forall l\in[0,m]$, in conjunction with an initial ground state $\psi_0(H_{\mathsmaller{\rm FK}}(0)) \defeq (T\,{+}\,1)^{-1/2} \sum_{t=0}^{\mathsmaller{T}} |t\ra_{\mathsmaller{C}}|\phi_0\ra_{\mathsmaller{L}}$.

For a sequence of Gibbs operators $\{\exp[-\tau\mathfrak{M}(H_k)]\}_{k=1}^{\mathsmaller{K}}$, $\tau>0$ associated with a sequence of FBM tensor monomials $\{H_k\}_{k=1}^{\mathsmaller{K}}$ LTK- and GSP-decomposing a Feynman-Kitaev Hamiltonian $H_{\mathsmaller{\rm FK}} = \sum_{k=1}^{\mathsmaller{K}} H_k$, when $H_k$, $k\in[1,K]$ is $({\cal C}_{\mathsmaller{C}\!} \times {\cal C}_{\mathsmaller{L}})$-diagonal of the form $Z_1^{\pm}\otimes Z_2^{\pm}$, it is obvious that $\exp[-\tau\mathfrak{M}(H_k)]$ has the same amplitude integral $D_k(\pi_kq_{k-1})$ with respect to any $\pi_k\in A_*$ and any $q_{k-1}$ in the support of any ground state of $\mathfrak{M}(Z_1^{\pm}\otimes Z_2^{\pm})$, while the probability of encountering a $q_{k-1}$ out of the supports of the ground states of $\mathfrak{M}(Z_1^{\pm}\otimes Z_2^{\pm})$ can be made negligible by choosing a sufficiently large energy constant $\gamma_0$. For a Gibbs operator of the form $\exp[-\tau\mathfrak{M}(X\otimes R(2\theta))]$, $\theta\in[-\pi/2,\pi/2)$, the only relevant states are the two ground states $|+\ra|\theta_+\ra$, $|-\ra|\theta_-\ra$ and the two lowest excited states $|+\ra|\theta_-\ra$, $|-\ra|\theta_+\ra$, with $|\pm\ra \defeq (|0\ra \pm |1\ra)/\sqrt{2}$, $|\theta_+\ra \defeq \cos\theta|0\ra + \sin\theta|1\ra$, $|\theta_-\ra \defeq \cos\theta|1\ra - \sin\theta|0\ra$. It is easy to verify that for any of the cases of $q_{k-1}$ falling into the support of the basis state $\mathfrak{M}(|0\ra|0\ra)$, $\mathfrak{M}(|0\ra|1\ra)$, $\mathfrak{M}(|1\ra|0\ra)$, or $\mathfrak{M}(|1\ra|1\ra)$, the amplitude integral $D_k(q_{k-1})$ of $\la\cdot|\exp[-\tau\mathfrak{M}(X\otimes R(2\theta))]|q_{k-1}\ra$ always yields the same value $(1+e^{-\tau}) + (1-e^{-\tau})(|\cos 2\theta|+|\sin 2\theta|)$, so long as the strength of the Dirac potential barrier and well for each bi-fermion is set to $\gamma_0>0$, which is sufficiently large such that the wavefunction within each logic well of each bi-fermion always approximates a half-sine with an $O(\gamma_0^{-2})$ error \cite{Wei20}. Therefore, any Gibbs operator $\exp[-\tau\mathfrak{M}(X\otimes R(2\theta))]$ associated with a free Feynman-Kitaev propagator involving a free $R$-gate $R(2\theta)$ does not induce any path dependency of amplitude integrals. It follows straightforwardly that the same is true for a Gibbs operator associated with a free Feynman-Kitaev propagator involving a controlled $R$-gate $R(2\theta)\otimes Z^{+\!} + R(-2\theta)\otimes Z^{-\!}$ because the amplitude integral yields the same value regardless of the control logic rebit is in the null state of $Z^{+\!}$ or $Z^-$. Finally, any Gibbs operator $\exp[-\tau\mathfrak{M}(Z^{-\!}\otimes Z^{+\!}\otimes(I-X\otimes U))]$ associated with a controlled Feynman-Kitaev propagator with any $\tau = O({\rm poly}(T+n))$ can be realized by repeating applications of $G_{\rm free}(\delta\tau) \defeq \exp[-\delta\tau\mathfrak{M}(I-X\otimes U)]$ followed by $G_{\rm ctrl}(\delta\tau) \defeq \exp[-\delta\tau\mathfrak{M}(Z^{-\!}\otimes Z^{+\!}\otimes(I-X\otimes U)+(I-Z^{-\!}\otimes Z^{+\!})\otimes(I+X\otimes U))]$ for $m = O({\rm poly}(T+n))$ times, $\delta\tau \defeq \tau/m$, such that each of $G_{\rm free}(\delta\tau)$ and $G_{\rm ctrol}(\delta\tau)$ is already AIB, and the product operator $\exp[-\tau\mathfrak{M}(Z^{-\!}\otimes Z^{+\!}\otimes(I-X\otimes U))] = [G_{\rm ctrl}(\delta\tau){\kern0.1em}G_{\rm free}(\delta\tau)]^m + O(1/{\rm poly}(m))$ is naturally AIB.

\begin{theorem}{} \label{TheoremTwo}
A homophysics $\mathfrak{M}$ exists, which maps the Feynman-Kitaev Hamiltonian $H_{\mathsmaller{\rm FK}}$ as defined in equations (\ref{HamilFK}-\ref{HamilProp}) to an SCFF Hamiltonian $\mathfrak{M}(H_{\mathsmaller{\rm FK}})$ that is both LTK- and GSP-decomposed into a sequence of CFF interactions, such that any Gibbs operator generated by $\mathfrak{M}(H_{\mathsmaller{\rm FK}})$ is well approximated by an AIB sequence of Gibbs operators generated by said sequence of CFF interactions.
\iftoggle{UseREVTEX}{}{\vspace{-1.5ex}}
\end{theorem}

\begin{proof}[Proof]
An $\mathfrak{M}(H_{\mathsmaller{\rm FK}})$ as constructed above is obviously SCFF and frustrate-free \cite{Bravyi10,Jordan10,Wei20}, which is both LTK- and GSP-decomposed into a sequence of CFF interactions, where each CFF interaction homophysically implements one of the FBM tensor monomials in equations (\ref{HamilClock}-\ref{HamilProp}), involves no more than $6$ bi-fermions and moves even less. For any $\tau>0$, $\tau = O({\rm poly}(T+n))$, an AIB sequence of an $O({\rm poly}(T+n))$ number of Gibbs operators can be generated from the sequence of CFF interactions as described above, whose product approximates $\exp[-\tau\mathfrak{M}(H_{\mathsmaller{\rm FK}})]$ to within an $O(1/{\rm poly}(T+n))$ accuracy. 
\iftoggle{UseREVTEX}{}{\vspace{-1.5ex}}
\end{proof}

It is noted that the above-specified technique to render a sequence of Gibbs operators AIB is only by way of example but no means of limitation. It should be obvious to one skilled in the art that many other ways can achieve the same effect. Furthermore, the AIB property is only used for mathematical rigor in theory to establish rapid mixing of one particular Monte Carlo sampling method. A practical Monte Carlo simulation can sample restricted Feynman paths efficiently using an alternative method without requiring an AIB property of the associated sequence of Gibbs operators.

\begin{theorem}{} \label{TheoremThree}
BQP $\subseteq$ BPP, therefore BPP $=$ BQP, as BPP $\subseteq$ BQP is well known.
\iftoggle{UseREVTEX}{}{\vspace{-1.5ex}}
\end{theorem}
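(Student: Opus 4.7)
The plan is to chain Theorem 2 into Theorem 1 via the standard Feynman-Kitaev encoding, converting any BQP computation into a ground-state sampling problem on an SFF Hamiltonian that is then simulated by classical Monte Carlo. First I would take an arbitrary BQP algorithm presented as a sequence of controlled $R$-gates $\{U_t\}_{t\in[1,\mathsmaller{T}]}$ of computational size $T+n = O({\rm poly}(n))$, and form its Feynman-Kitaev Hamiltonian $H_{\mathsmaller{\rm FK}}$ as in equations (\ref{HamilFK}-\ref{HamilProp}). By Kitaev's classical analysis, $H_{\mathsmaller{\rm FK}}$ has ${\rm size}(H_{\mathsmaller{\rm FK}}) = O(T+n)$, a non-degenerate ground state $\psi_0(H_{\mathsmaller{\rm FK}}) = (T+1)^{-1/2} \sum_{t=0}^{\mathsmaller{T}} |t\ra_{\mathsmaller{C}}|\phi_t\ra_{\mathsmaller{L}}$ encoding the full computational history, and a spectral gap $\Omega(1/{\rm poly}(T+n))$.

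Next I would invoke Theorem 2 to obtain $\mathfrak{M}(H_{\mathsmaller{\rm FK}})$, an SFF Hamiltonian that is both LTK- and GSP-decomposed. Because $\mathfrak{M}$ is a homophysics in the sense of Definition~\ref{defiHomoIsophysics}, the Hilbert space structure, non-degeneracy of the ground state, and the spectral gap are preserved up to an $O({\rm poly}({\rm size}(H_{\mathsmaller{\rm FK}})))$ factor, so $\mathfrak{M}(H_{\mathsmaller{\rm FK}})$ satisfies the hypotheses of Theorem 1 with ${\rm size}(\mathfrak{M}(H_{\mathsmaller{\rm FK}})) = O({\rm poly}(T+n))$ and gap $\Omega(1/{\rm poly}(T+n))$.

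Then I would apply Theorem 1 (or the GSP reptation QMC variant discussed immediately after it) to $\mathfrak{M}(H_{\mathsmaller{\rm FK}})$ with an imaginary-time parameter $\tau = O({\rm poly}(T+n))$ large enough that $e^{-\tau[\mathfrak{M}(H_{\mathsmaller{\rm FK}})-\lambda_0]}$ is $\epsilon$-close to the projector onto the ground state. The resulting ${\rm poly}({\rm size},\epsilon^{-1})$-time Monte Carlo procedure emits configuration-space samples whose distribution is $\epsilon$-close in total variation distance to the Born density of $\psi_0(\mathfrak{M}(H_{\mathsmaller{\rm FK}}))$, which by the homophysical equivalence corresponds to sampling the product state $|t\ra_{\mathsmaller{C}}|\phi_t\ra_{\mathsmaller{L}}$ with uniform probability $1/(T+1)$ over $t \in [0,T]$. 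Postselecting on the classically-readable clock register reading $t = T$ --- an event of probability $\Omega(1/{\rm poly}(T+n))$ --- and then reading off the logic-register configuration yields a sample from the BQP output distribution $|\phi_{\mathsmaller{T}}\ra_{\mathsmaller{L}}$. Standard Chernoff amplification over $O({\rm poly}(T+n))$ independent repetitions boosts the acceptance probability and the output accuracy to constants, producing a BPP simulation of the original BQP algorithm. Combined with the well-known containment BPP $\subseteq$ BQP, this establishes BPP $=$ BQP.

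The main obstacle will be the bookkeeping that connects the abstract ground-state sample of $\mathfrak{M}(H_{\mathsmaller{\rm FK}})$ --- which lives in an orbifold configuration space of many bi-fermions on $\mathbb{T}^2$ with exchange-symmetrized wavefunctions --- back to the discrete $\{0,1\}^{\mathsmaller{T+2+n}}$ readout of the clock and logic rebits: one must verify that the double-well localization in each $\mathfrak{M}_1$-image cleanly decodes each bi-fermion into a rebit value with only exponentially small classification error in $\gamma_0$, that the perturbative gadgets implementing multi-rebit controlled gates through $X$-from-$XZ$ reductions (equation~(\ref{XXfromXZ})) degrade the spectral gap only polynomially, and that the $\epsilon$-errors in total variation distance propagate additively under postselection without blowing up the success probability beyond polynomial overhead.
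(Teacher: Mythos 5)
Your proposal follows the same two-step chain as the paper's own proof: invoke Theorem~\ref{TheoremTwo} to map the Feynman-Kitaev encoding of a BQP circuit to an LTK- and GSP-decomposed SFF Hamiltonian $\mathfrak{M}(H_{\mathsmaller{\rm FK}})$ with polynomial size and gap, then apply Theorem~\ref{TheoremOne} to simulate it by classical Monte Carlo in polynomial time, yielding BQP $\subseteq$ BPP. Your version is more explicit than the paper's two-sentence proof about the readout stage --- postselecting on the clock register reading $t=T$, decoding each bi-fermion configuration back to a rebit value with exponentially small misclassification in $\gamma_0$, and Chernoff amplification --- steps the paper implicitly relies on but does not spell out.
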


\begin{proof}[Proof]
Using the Feynman-Kitaev construct as specified in equations (\ref{HamilFK}-\ref{HamilProp}), any BQP algorithm of $T\in\mathbb{N}$ gates on a quantum register of $n\in\mathbb{N}$ rebits can be mapped to the ground state of a Feynman-Kitaev Hamiltonian $H_{\mathsmaller{\rm FK}}$ of an $O({\rm poly}(T+n))$ size, whose ground state is non-degenerate and polynomial-gapped. Moreover, an adiabatic sequence of Feynman-Kitaev Hamiltonians $\{H_{\mathsmaller{\rm FK}}(l)\}_{l=0}^{m}$, $m\in\mathbb{N}$ can be designed to reach the final $H_{\mathsmaller{\rm FK}}(m)=H_{\mathsmaller{\rm FK}}$ from an initial $H_{\mathsmaller{\rm FK}}(0)$ with trivial Feynman-Kitaev propagators $h_{{\rm prop},\,t\,}(0) \defeq I - X_{\mathsmaller{C},\,t}$, $\forall t\in[1,T]$ and a trivial ground state $\psi_0(H_{\mathsmaller{\rm FK}}(0)) \defeq (T\,{+}\,1)^{-1/2} \sum_{t=0}^{\mathsmaller{T}} |t\ra_{\mathsmaller{C}}|\phi_0\ra_{\mathsmaller{L}}$.

By \myTheorem~\ref{TheoremTwo}, each $H_{\mathsmaller{\rm FK}}(l)$, $l\in[0,m]$ can be homophysically mapped to an SCFF Hamiltonian $\mathfrak{M}[H_{\mathsmaller{\rm FK}}(l)]$ that is both LTK- and GSP-decomposed into a sequence of CFF interactions, which generates an AIB sequence of Gibbs operators to project out the ground state $\psi_0(H_{\mathsmaller{\rm FK}}(l))$ to within an $O(1/{\rm poly}(T+n))$ accuracy. \myTheorem~\ref{TheoremOne} guarantees an FPRAS, which is in BPP, to simulate an AIB sequence of Gibbs operators comprising concatenated subsequences of Gibbs operators, each subsequence corresponding to an $H_{\mathsmaller{\rm FK}}(l)$, $l\in[0,m]$. This establishes BQP $\subseteq$ BPP, therefore BPP $=$ BQP.
\iftoggle{UseREVTEX}{}{\vspace{-1.5ex}}
\end{proof}

In conclusion, it has been proved that BPP and BQP are exactly the same computational complexity class. As a consequence, any computational problem admitting a BQP algorithm has a BPP solution by mapping the BQP algorithm into a GSQC problem, and simulating the GSQC system by Monte Carlo on a BPP machine. The overall method is called {\it Monte Carlo quantum computing} (MCQC) \cite{Wei20}. The significance of BPP$\,=\,$BQP can hardly be overstated. Despite providing a negative answer to the long outstanding question of whether the laws of quantum mechanics endow more computational power, it opens up new avenues for developing and identifying efficient algorithms for classical computers from the vantage point of quantum computing. Any quantum based or inspired solution to a computational problem translates automatically into an efficient classical probabilistic algorithm. For instance, it is now certain that integer factorization is in BPP, thanks to Shor's celebrated quantum discovery \cite{Shor97} that catalyzed the quest of quantum computing. Besides the standard decision or promise problems \cite{Even84,Goldreich06,Arora09} in BPP$\,=\,$BQP, there are a vast number of problems in the forms of function evaluation, objective optimization, and target search, {\it etc.} \cite{Arora09} that are either equivalent or polynomially reducible to a problem in BQP, therefore, also efficiently solvable via MCQC. An excellent example is the quantum algorithm of Harrow, Hassidim, and Lloyd for linear systems of equations \cite{Harrow09}, which now gets an efficient MCQC implementation.

Being able to simulate quantum systems efficiently was one of the reasons that Feynman and others proposed quantum computing and quantum computers \cite{Feynman85,Lloyd96}. One important application of MCQC without a sign problem is naturally to simulate many-body quantum systems efficiently via Monte Carlo on a classical computer, by just constructing a BQP algorithm simulating the quantum system, then mapping the BQP algorithm to an efficient MCMC through a Feynman-Kitaev construct. Such many-body quantum systems include any FSFS as a special case.

On the other hand, an FSFS can be directly simulated via a restricted path integral method without devising a BQP algorithm then using a Feynman-Kitaev construct. In one exemplary embodiment for an FSFS of $S\in\mathbb{N}$ species, each species $s\in[1,S]$ having $n_s\in\mathbb{N}$ identical fermions, the Hamiltonian is a Schr\"odinger operator and written as $H = \sum_{s=1}^{\mathsmaller{S}} \sum_{l=1}^{n_s} \sum_{m=1}^{n_s} H_{slm}$, with $H_{slm} = {-}(\Delta_{sl}+\Delta_{sm})/2n_s + V/\sum_{s=1}^{\mathsmaller{S}}n_s^2$ moves only the $l$-th and the $m$-th labeled particle of the $s$-th species through the Laplace-Beltrami operators $\Delta_{sl}$ and $\Delta_{sm}$, for each $(s,l,m)\in[1,S]\times[1,n_s]^2$. Using the method of LTK-decomposition, a Gibbs operator $e^{-\tau H}$, $\tau\in(0,\infty)$ is approximated by applying a sequence of Gibbs operators $\{{\sf G}_{slm} \defeq e^{-\tau H_{slm}/m}\}_{(s,l,m)}$ for a polynomial-bounded $m\in\mathbb{N}$ times, each ${\sf G}_{slm}$ associated with a single Feynman slice being partially fermionic exchange-symmetrized with respect to an order-$2$ group $G_{slm}$ that permutes the $l$-th and $m$-th identical fermions of the $s$-th species, for all $(s,l,m)\in[1,S]\times[1,n_s]^2\}$. By a theorem of Diaconis and Shahshahani \cite{Diaconis81}, a typical Feynman path with such partial fermionic symmetrizations effectively selects a sample of fully symmetrized and signed Feynman path almost surely and uniformly, such that an integral of signed Wiener measure densities of said Feynman paths with partial symmetrizations well approximate the FSFS. It is important to note that $\forall(s,l,m)\in[1,S]\times[1,n_s]^2$, $\forall\tau\in(0,\infty)$, and any given configuration point $q$, the partially fermionic symmetrized Gibbs kernels $\la\cdot|e^{-\tau H_{slm}}|{\cal F}_{slm\,}q\ra = \la{\cal F}_{slm\!}\cdot|e^{-\tau H_{slm}}|q\ra = \la{\cal F}_{slm\!}\cdot|e^{-\tau H_{slm}}|{\cal F}_{slm\,}q\ra$, with ${\cal F}_{slm} \defeq |G_{slm}|^{-1} \sum_{\,\pi\in G_{slm}} (-1)^{\pi\,} \pi$, are all efficiently computable whose nodal surfaces can be determined at an $O({\rm poly}({\rm size}(H),\epsilon))$ computational cost, $\epsilon$ being any desired numerical accuracy.

Changing the perspective again, when an MFS comprises a certain fermion species that has many identical fermions residing separately in multiple non-overlapping regions of a {\it substrate space} \cite{Wei20} or a phase space ({\it e.g.}, a position-momentum space) of single particles, then the identical fermions are divided into separated clusters each of which corresponds to a specific spatial region separated from other spatial regions and becomes a unique effective species distinguishable from other effective species corresponding to other spatial regions \cite{Messiah99}, where the identical fermions within each effective species residing in a separated spatial region are indistinguishable and obey fermionic exchange symmetry, whereas different clusters of fermions residing in different spatial regions are mutually distinguishable as different effective species. Such an MFS is homophysical to an effective system comprising multiple effective species.

Finally, it is useful to note that the analyses, algorithms, and methods presented {\it supra} can be extended straightforwardly to physical and computational systems over a discrete or continuous-discrete product configuration space \cite{Wei20}, only that the nodal restriction or path rectification may need to invoke the so-called {\it lever rule} \cite{vanBemmel94,Sorella15,Wei20} using efficiently solved nodal structures of CFF interactions or their associated Gibbs kernels. Besides, when a compact configuration space is approximated by a lattice with the spacing between neighbor lattice points being sufficiently small, the error in localizing nodal surfaces becomes negligibly small and the simulation results from continuous and discrete configuration spaces converge.

\iftoggle{UseREVTEX}{}{\vspace{-1.5ex}}

\end{document}